\newtheorem{definition}{Definition}[section]
\newtheorem{theorem}{Theorem}[section]
\newtheorem{lemma}{Lemma}[section]
\newtheorem{cor}{Corollary}
\newtheorem{obs}{Observation}[section]
\newtheorem{claim}{Claim}[section]
\author{Michael A. Henning\affiliationmark{1}
  \and Arti Pandey\affiliationmark{2}
  \and Vikash Tripathi\affiliationmark{2}}
\title[Semipaired Domination in Some Subclasses of Chordal Graphs]{Semipaired Domination in Some Subclasses of Chordal Graphs}
\affiliation{
Department of Mathematics and Applied Mathematics, University of Johannesburg Auckland Park, South Africa\\
Department of Mathematics, Indian Institute of Technology Ropar, Nangal Road, Rupnagar, Punjab, INDIA}
\keywords{Domination, Semipaired domination, Block graphs, NP-completeness, Graph algorithms.}
\begin{document}
\publicationdetails{23}{2021}{1}{19}{6782}
\maketitle
\begin{abstract}
 A dominating set $D$ of a graph $G$ without isolated vertices is called semipaired dominating set if $D$ can be partitioned into $2$-element subsets such that the vertices in each set are at distance at most $2$. The semipaired domination number, denoted by $\gamma_{pr2}(G)$ is the minimum cardinality of a semipaired dominating set of $G$. Given a graph $G$ with no isolated vertices, the \textsc{Minimum Semipaired Domination} problem is to find a semipaired dominating set of $G$ of cardinality $\gamma_{pr2}(G)$. The decision version of the \textsc{Minimum Semipaired Domination} problem is already known to be NP-complete for chordal graphs, an important graph class. In this paper, we show that the decision version of the \textsc{Minimum Semipaired Domination} problem remains NP-complete for split graphs, a subclass of chordal graphs. On the positive side, we propose a linear-time algorithm to compute a minimum cardinality semipaired dominating set of block graphs. In addition, we prove that the \textsc{Minimum Semipaired Domination} problem is APX-complete for  graphs with maximum degree $3$.
\end{abstract}

\section{Introduction}
\label{sec:1}

For a graph $G=(V,E)$, a vertex $v\in V$ is said to \emph{dominate} a vertex $w\in V$ if either $v=w$ or $vw\in E$. A \textit{dominating set} of $G$ is a set $D \subseteq V$ such that every vertex in $V$ is dominated by at least one vertex of $D$. The minimum cardinality of a dominating set of $G$ is called the \textit{domination number} of $G$ denoted by $\gamma(G)$. Many facility location problems can be modelled using the concept of domination in graphs. Due to vast applications of domination, many variations of dominations have also been introduced in the literature. The domination problem are widely studied from combinatorial as well as algorithmic point of view, see~\cite{hhs1,hhs2}.

One important variation of domination is paired domination. The concept of paired domination was introduced by Haynes and Slater in \cite{paired}. For a graph $G$ with no isolated vertices, a dominating set $D$ is called a \emph{paired dominating set}, abbreviated as PD-set, if the graph induced by $D$ has a perfect matching $M$. Two vertices joined by an edge of $M$ are said to be \emph{paired} and are also called \emph{partners} in $D$. The \textsc{Minimum paired domination} problem is to find a PD-set of $G$ of minimum cardinality. The cardinality of such a set is known as the \emph{paired domination number} of $G$, and is denoted by $\gamma_{pr}(G)$. A survey of paired domination can be found in \cite{paired2}.

A relaxed version of paired domination, known as \textit{semipaired domination} was introduced by Haynes and Henning~\cite{semi-paired3}, and further studied by others~\cite{semi-paired4,semi-paired2,semi-paired5,iwoca,semi-paired6}. For a graph $G$ with no isolated vertex, a \textit{semipaired dominating set}, abbreviated as semi-PD-set, is a dominating set $D$ of $G$ such that the vertices in $D$ can be partitioned into $2$-sets such that if $\{u,v\}$ is a $2$-set, then $uv \in E(G)$ or the distance between $u$ and $v$ is~$2$. We say that $u$ and $v$ are \emph{semipaired}, and that $u$ and $v$ are \emph{partners}. The minimum cardinality of a semi-PD-set of $G$ is called the \textit{semipaired domination number} of $G$, and is denoted by $\gamma_{pr2}(G)$. For a graph $G$ with no isolated vertices, the \textsc{Minimum Semipaired Domination} problem is to find a semi-PD-set of cardinality $\gamma_{pr2}(G)$. For a  given graph $G$ and a positive integer $k$, the \textsc{Semipaired Domination Decision} problem is to determine whether $G$ has a semi-PD-set of cardinality at most $k$ or not. Since every PD-set is a semi-PD-set, and since every semi-PD-set is a dominating set, we have the following observation.

\begin{obs}{  (\cite{semi-paired3})}
 \label{ob:chain}
For every graph $G$ without isolated vertices,
\[
\gamma(G) \le \gamma_{pr2}(G) \le \gamma_{pr}(G).
\]
\end{obs}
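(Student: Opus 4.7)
The plan is to establish the two inequalities $\gamma(G)\le \gamma_{pr2}(G)$ and $\gamma_{pr2}(G)\le\gamma_{pr}(G)$ separately, in each case by exhibiting a set of the relevant type that certifies the bound. Both inequalities should follow essentially from unpacking the definitions given just above the observation, so the work is conceptual rather than computational.

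For the first inequality, I would take an optimum semi-PD-set $D$ of $G$, so that $|D|=\gamma_{pr2}(G)$. By the definition of a semi-PD-set, $D$ is in particular a dominating set of $G$, so $\gamma(G)\le |D|=\gamma_{pr2}(G)$. No use of the partitioning condition is needed here; it is a free inclusion of set classes.

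For the second inequality, I would take an optimum PD-set $D'$ of $G$, so that $|D'|=\gamma_{pr}(G)$, and argue that $D'$ is also a semi-PD-set. By definition, $G[D']$ admits a perfect matching $M$, which partitions $D'$ into $2$-sets $\{u,v\}$ with $uv\in E(G)$. In particular the distance between $u$ and $v$ in $G$ is $1\le 2$, so this partition witnesses that $D'$ is a semi-PD-set, giving $\gamma_{pr2}(G)\le |D'|=\gamma_{pr}(G)$.

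The only subtlety to flag is that both $\gamma_{pr2}(G)$ and $\gamma_{pr}(G)$ are well defined precisely because $G$ has no isolated vertices, which guarantees that a PD-set, and hence a semi-PD-set, exists in the first place; without this hypothesis the bounds are vacuous. I do not anticipate any real obstacle — the statement is a direct consequence of the chain \emph{paired dominating set} $\Rightarrow$ \emph{semipaired dominating set} $\Rightarrow$ \emph{dominating set}, which is the main content of the observation.
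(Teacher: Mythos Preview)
Your proposal is correct and matches the paper's own justification exactly: the paper states, immediately before the observation, that ``every PD-set is a semi-PD-set, and \ldots\ every semi-PD-set is a dominating set,'' which is precisely the chain of inclusions you unpack. There is nothing more to add.
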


The algorithmic study of the \textsc{Minimum Semipaired Domination} problem was initiated by Henning et al. in \cite{iwoca}. They
 proved that  the \textsc{Semipaired Domination Decision} problem is NP-complete even for bipartite graphs and chordal graphs. They also proposed a linear-time algorithm to compute a minimum cardinality semi-PD-set of an interval graph. They proposed a $1+\ln(2\Delta+2)$-approximation algorithm for the \textsc{Minimum Semipaired Domination} problem, where $\Delta$ denotes the maximum degree of the graph. On the negative side, they proved that \textsc{Minimum Semipaired Domination} problem cannot be approximated within $(1-\epsilon) \ln|V|$ for any
$\epsilon > 0$ unless P$=$NP. In this paper, we continue the algorithmic study of the \textsc{Minimum Semipaired Domination} problem. The main contributions of the paper are summarized below.

In Section~\ref{sec:2}, we discuss some definitions and notations. In this section we also observe some graph classes where paired domination and semipaired domination problems differ in complexity. In Section $3$, we prove the \textsc{Semipaired Domination Decision} problem is NP-complete for split graphs. In Section $4$, we propose a linear time algorithm to compute a minimum cardinality semipaired dominating set in block graphs. In Section 5, we show that the \textsc{Minimum Semipaired Domination} problem is APX-hard for graphs with maximum degree $3$.  Finally, Section $6$ concludes the paper.
\vspace*{.5cm}
\section{Preliminaries}
\label{sec:2}
\vspace*{.2cm}
\subsection{Definitions and Notations}
Let $G =(V, E)$ be a  graph. For a vertex $v\in V$, let $N_G(v)=\{u\in V \mid uv\in E\}$ and $N_G[v]=N_G(v)\cup \{v\}$ denote the {\it open neighborhood} and the {\it closed neighborhood} of $v$, respectively. For a set $S\subseteq V$, the sets $N_{G}(S)=\bigcup_{u \in S} N_{G}(u)$ and $N_{G}[S]=N_{G}(S)\cup S$ are called {\it open neighborhood} and the {\it closed neighborhood} of $S$, respectively.
 For a set $S \subset V$, the graph $G\setminus S$ is obtained from $G$ by deleting all vertices in $S$ and all edges incident with vertices in $S$. If $S = \{v\}$, we write $G \setminus v$ rather than $G \setminus\{v\}$.
A \emph{cut vertex} in a connected graph $G$ is a vertex $v \in V$ such that $G \setminus v$ is disconnected. Let $n$ and $m$ denote the number of vertices and edges of $G$, respectively. We use standard notation, $[k] =\{1,2, \ldots, k \}$. In this paper, we only consider connected graphs with at least two vertices.

A set $S\subseteq V$ is called an \emph{independent
set} of $G$ if $uv\notin E$ for all $u,v\in S$. A set
$K\subseteq V$ is called a \emph{clique} of $G$ if $uv\in E$ for all $u,v\in K$. A graph
$G$ is said to be a \emph{chordal graph} if every cycle in $G$ of length at least four has a \emph{chord}, that is, an edge joining two non-consecutive vertices of the cycle. A chordal graph $G$ is a \emph{split graph} if $V$ can be partitioned into two sets $C$ and $I$ such that $C$ is a clique and $I$ is an independent set.

A \emph{rooted tree} $T$ distinguishes one vertex $r$ called the \emph{root}. For each vertex $v \ne r$ of $T$, the \emph{parent} of $v$ is the neighbor of $v$ on the unique $(r,v)$-path, while a
\emph{child} of $v$ is any other neighbor of $v$. Further, the \emph{grandparent} of $v$ is the vertex at distance~$2$ from $v$ on the unique $(r,v)$-path. A \emph{descendant} of $v$ is a vertex $u \ne v$ such that the unique $(r,u)$-path contains $v$. A
\emph{grandchild} of $v$ in $T$ is a descendant of $v$ at distance~$2$ from $v$.
\vspace*{.2cm}
\subsection{Complexity difference between paired domination and semipaired domination}

In this subsection, we discuss the complexity difference between paired domination and semipaired domination. We show that for the class of GP$4$ graphs, which we define below, the decision version of the \textsc{Minimum Paired Domination} problem is NP-complete, but the \textsc{Minimum Semipaired Domination} problem is easily solvable. On the other hand, we introduce a graph class called GP$5$ graphs, and we show that the \textsc{Semipaired Domination Decision} problem is NP-complete for GP$5$ graphs, but the \textsc{Minimum Paired Domination} problem is easily solvable for this graph class.

The class of GP$4$ graphs was introduced by Henning et al. in \cite{semito}. Below we recall the definition of GP$4$ graphs.

\begin{definition}[GP$4$-graph]
\label{defn1}
{\rm
A graph $G=(V,E)$ is called a \emph{GP}$4$-\emph{graph} if it can be obtained from a general connected graph $H=(V_{H},E_{H})$ where $V_{H}=\{v_{1},v_{2},\ldots,v_{n_{H}}\}$, by adding a path of length~$4$ to every vertex of $H$. Formally, $V = V_{H} \cup \{ w_{i},x_{i},y_{i},z_{i} \mid  i \in [n_{H}] \, \}$ and $E=E_{H}\cup \{v_{i}w_{i},w_{i}x_{i},x_{i}y_{i},y_{i}z_{i}\mid  i \in [n_H] \, \}$, where $n_{H}$ denotes the number of vertices in $H$.
}
\end{definition}

\begin{theorem}
 \label{t:GP4}
If $G$ is a \emph{GP}$4$-\emph{graph}, then $\gamma_{pr2}(G) = \frac{2}{5}|V(G)|$.
\end{theorem}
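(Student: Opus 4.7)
The plan is to prove the two bounds $\gamma_{pr2}(G) \leq 2n_H$ and $\gamma_{pr2}(G) \geq 2n_H$ separately, after noting that $|V(G)| = n_H + 4n_H = 5n_H$ so that the target identity is $\gamma_{pr2}(G) = 2n_H$.

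For the upper bound, I would exhibit an explicit semi-PD-set. Consider $D = \{w_i, y_i \mid i \in [n_H]\}$, paired as $\{w_i, y_i\}$ for each $i$. Since $w_i y_i \notin E$ but $w_i x_i y_i$ is a path in $G$, each such pair is at distance exactly~$2$, so the pairing is admissible. For domination, I would observe that inside the pendant path $v_i w_i x_i y_i z_i$ the vertex $w_i$ dominates $\{v_i, w_i, x_i\}$ and $y_i$ dominates $\{x_i, y_i, z_i\}$, together covering all five vertices of $V_i := \{v_i, w_i, x_i, y_i, z_i\}$; as every vertex of $G$ lies in exactly one such $V_i$, the set $D$ dominates $G$. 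Thus $\gamma_{pr2}(G) \leq |D| = 2n_H$.

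For the lower bound, I would invoke Observation~\ref{ob:chain} to reduce the task to proving $\gamma(G) \geq 2n_H$, thereby avoiding the need to reason about the pairing at all. Let $D$ be any dominating set of $G$, and fix $i \in [n_H]$. Because $N_G[z_i] = \{y_i, z_i\}$, domination of $z_i$ forces $D \cap \{y_i, z_i\} \neq \emptyset$. Because $N_G[w_i] = \{v_i, w_i, x_i\}$, domination of $w_i$ forces $D \cap \{v_i, w_i, x_i\} \neq \emptyset$. These two required subsets of $V_i$ are disjoint, so $|D \cap V_i| \geq 2$. Since $\{V_i\}_{i \in [n_H]}$ is a partition of $V(G)$, summing yields $|D| \geq 2n_H$, and consequently $\gamma_{pr2}(G) \geq \gamma(G) \geq 2n_H$.

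There is no real obstacle in this argument: the whole proof rests on spotting the pairing $\{w_i, y_i\}$, whose distance is $2$ precisely because the attached path has length~$4$, and on the observation that the two forced "end-of-path" private neighbourhoods for $w_i$ and $z_i$ are disjoint. Both halves therefore run in a few lines, and the bound matches tightly with equality at $\gamma_{pr2}(G) = 2n_H = \tfrac{2}{5}|V(G)|$.
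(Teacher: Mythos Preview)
Your proof is correct. The upper bound is identical to the paper's, using the same set $\{w_i,y_i\mid i\in[n_H]\}$. For the lower bound the paper argues slightly differently: it claims that any semi-PD-set contains at least two vertices from the four-element set $\{w_i,x_i,y_i,z_i\}$, which is justified by noting that whichever of $y_i,z_i$ lies in the set must have its semipartner within distance~$2$, hence also inside $\{w_i,x_i,y_i,z_i\}$. Your route instead drops to ordinary domination via Observation~\ref{ob:chain} and uses that $N_G[w_i]=\{v_i,w_i,x_i\}$ and $N_G[z_i]=\{y_i,z_i\}$ are disjoint subsets of $V_i$, forcing $|D\cap V_i|\ge 2$ for any dominating set. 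This is a mild but genuine difference: your argument is more elementary (it never touches the pairing) and in fact establishes the stronger fact $\gamma(G)=2n_H$, while the paper's version stays within the semi-PD framework. Both are equally short.
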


\begin{proof}
Let $G$ be a GP$4$-graph of order $n=|V(G)| = 5|V_{H}|$ as constructed in Definition~\ref{defn1}. The set $S = \{w_i, y_i \mid i \in [n_H]\}$ is a semi-PD-set of $G$, implying that $\gamma_{pr2}(G) \leq \frac{2}{5}|V(G)|$. Every semi-PD-set of $G$ must contain at least two vertices from the set $\{ w_i, x_i ,y_i ,z_i\}$ for each $i \in [n_{H}]$. Thus, $\gamma_{pr2}(G) \geq \frac{2}{5}|V(G)|$.  Consequently, $\gamma_{pr2}(G) = \frac{2}{5}|V(G)|$.
\end{proof}

\begin{lemma}
\label{l:lemGP4}
If $G$ is a \emph{GP}$4$-\emph{graph} constructed from a graph $H$ as in Definition~\ref{defn1}, then $H$ has a PD-set of cardinality at most $k$, $k\leq n_H$, if and only if $G$ has a PD-set of cardinality at most $2 n_H +k$.
\end{lemma}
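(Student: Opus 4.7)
My plan is to prove both directions separately, with the converse being the substantial one.

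For the forward direction, let $D_H$ be a PD-set of $H$ of size at most $k$ with a perfect matching $M_H$ of $H[D_H]$. The set $D = D_H \cup \{x_i, y_i : i \in [n_H]\}$, paired by $M_H$ together with the edges $\{x_i, y_i\}$, is a PD-set of $G$ of size $|D_H| + 2n_H \le 2n_H + k$: each edge $x_i y_i$ is itself a pair and dominates the branch $\{w_i, x_i, y_i, z_i\}$, while $D_H$ already dominates $V_H$.

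For the reverse direction, let $D$ be a PD-set of $G$ with $|D| \le 2 n_H + k$. First, I show $|D \cap \{w_i, x_i, y_i, z_i\}| \ge 2$ for every $i$: $z_i$ must be dominated, forcing $y_i \in D$ or $z_i \in D$, and the pairing requirement then forces a second branch vertex into $D$ (the only within-branch neighbours of $y_i$ and $z_i$ are $x_i, y_i, z_i$). Next, I classify each $v_i$ as Type~I ($v_i \in D$, paired with an $H$-neighbor), Type~II ($v_i \in D$, paired with $w_i$), Type~III ($v_i \notin D$, dominated by an $H$-neighbor in $D$), or Type~IV ($v_i \notin D$, dominated only by $w_i$). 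Writing $a = |\text{Type II}|$ and $b = |\text{Type IV}|$, a case analysis of in-branch pairings forces $|D \cap \{w_i, x_i, y_i, z_i\}| = 3$ for Type~II and $= 4$ for Type~IV. Summing over $i$ and using $|D| \le 2 n_H + k$ yields
\[
|D \cap V_H| + a + 2 b \;\le\; k.
\]

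I then modify $D$ branch by branch into a PD-set $D^\star$ of $G$ with $|D^\star| \le |D|$ such that $D^\star \cap \{w_i, x_i, y_i, z_i\} = \{x_i, y_i\}$ paired, for every $i$. For a Type~IV vertex $v_i$, remove $\{w_i, z_i\}$ from $D$ and insert $\{v_i, v_j\}$ for any $H$-neighbor $v_j$ of $v_i$, re-pairing as $(v_i, v_j)$ and $(x_i, y_i)$; by the Type~IV condition $v_j \notin D$, so the swap is conflict-free and $|D|$ is preserved. Analogous local swaps handle Type~II branches and any residual branch with $n_i = 4$. Once every branch is canonical, $D^\star \cap V_H$ is a PD-set of $H$: every $v_i \notin D^\star$ has an $H$-neighbor in $D^\star$ (since $w_i \notin D^\star$), every $v_i \in D^\star$ is paired inside $V_H$ (for the same reason), and $|D^\star \cap V_H| = |D^\star| - 2 n_H \le k$.

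The main obstacle is the canonicalization of Type~II branches: the $H$-neighbor $v_j$ chosen as the new partner of $v_i$ may already lie in $D$ with its own partner, so a single local swap can cascade through a chain of re-pairings. The counting bound $|D \cap V_H| + a + 2 b \le k$ is precisely the budget needed to pay for the extra $H$-neighbors introduced, and the modifications must be processed in an order (first Type~IV, then Type~II, finally the residual $n_i = 4$ branches) that never destroys an existing Type~I pair; the hypothesis $k \le n_H$ is what keeps the cascade from exhausting $V_H$.
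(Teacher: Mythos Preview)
Your forward direction matches the paper and is fine. The converse, however, has a genuine gap. You acknowledge that the canonicalisation of Type~II branches is ``the main obstacle'' and then try to resolve it by invoking the counting inequality and the hypothesis $k\le n_H$ to bound an unspecified cascade of re-pairings. That appeal is both unjustified and unnecessary: you never explain how the bound actually terminates the cascade, and in fact the paper's proof uses neither the counting inequality nor $k\le n_H$ in the converse direction at all. There is also a smaller issue in your Type~IV step: if two Type~IV vertices are $H$-adjacent, processing the first one inserts the second into $D$, so processing the second ``as Type~IV'' no longer makes sense unless you re-classify dynamically, which you do not say.

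The paper avoids all of this by splitting the work into two cascade-free stages. First it normalises only the tail $\{x_i,y_i,z_i\}$ of each branch so that $\{x_i,y_i\}$ is a pair and $z_i\notin D'$; this is a purely local swap (in the worst case $\{w_i,x_i,y_i,z_i\}\subseteq D'$ and one replaces $z_i$ by $v_i$, re-pairing as $(v_i,w_i),(x_i,y_i)$). Crucially, it does \emph{not} try to remove $w_i$ at this point. After stripping the $x_i,y_i$ one is left with $D\subseteq\{v_i,w_i:i\}$ of size at most $k$, and any surviving $w_i$ is necessarily paired with $v_i$. In the second stage each such $w_i$ is eliminated by the dichotomy you overlooked: if some $u\in N_H(v_i)$ lies outside the current $D$, replace $w_i$ by $u$; otherwise $N_H(v_i)\subseteq D$, and one simply deletes the whole pair $\{v_i,w_i\}$ (domination survives because every $H$-neighbour of $v_i$ is already in $D$). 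Neither branch of this dichotomy triggers a cascade, so no global budget argument is needed.
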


\begin{proof} Suppose $D$ is a PD-set of $H$ of cardinality at most~$k$. Then the set $D\cup \{x_i,y_i \mid i \in [n_H]\}$ is a PD-set of $G$ of cardinality at most $2 n_H +k$. Conversely, assume that $G$ has a PD-set $D'$ of cardinality at most~$2 n_H +k$. Now we obtain a PD-set of $H$ of size at most $k$ by updating $D'$. In order to dominate $z_i$, the set $D'$ must contain either $y_i$ or $z_i$. Without loss of generality, we may assume that $D'$ contains $y_{i}$ and that $x_{i} \in D'$ with $x_{i}$ and $y_i$ paired in $D'$. This implies that $D'$ contains exactly two vertices from the set $\{x_i,y_i,z_i\}$, where $i \in [n_H]$. Now we consider the set $D = D' \setminus \{x_i,y_i,z_i \mid i \in [n_H] \}$.  We note that $|D| \leq k$ and $D\subseteq \{v_{i},w_{i}\mid i \in [n_H]\}$. Further, the set $D$ dominates $V_{H}$, and for every vertex $u$ in $D$, the partner of $u$ is also present in $D$. Let $G'=G[V(H) \cup \{w_i \mid i \in [n_H]\}]$. Observe that $w_i$ can be paired only with $v_i$ in $G'$ for $i \in [n_H]$. If $w_i \in D$ and $N(v_i) \subseteq D$, then we update the set $D$ as $D \setminus \{w_i,v_i\}$. If there exist a vertex $u \in N(v_i)$ such that $u \notin D$, then we update the set $D$ as $D = (D \setminus \{w_i\}) \cup \{u\}$. We do this update for each $w_i \in D$. Note that the updated set $D$ is a PD-set of $H$ and $|D| \leq k$. Hence, the result follows.
\end{proof}

Since the decision version of the \textsc{Minimum Paired Domination} problem is known to be NP-complete for general graphs~\cite{paired}, the following theorem follows directly from Lemma~\ref{l:lemGP4}.

\begin{theorem}
The decision version of the \textsc{Minimum Paired Domination} problem is NP-complete for \emph{GP}$4$-\emph{graphs}.
\end{theorem}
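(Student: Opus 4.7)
The plan is to show NP-completeness by a standard two-step argument: membership in NP followed by a polynomial-time reduction that invokes Lemma~\ref{l:lemGP4} as a black box. For membership, given a candidate set $D' \subseteq V(G)$ of size at most $2n_H+k$ one checks in polynomial time that $D'$ dominates $V(G)$ and that $G[D']$ admits a perfect matching (the latter by any polynomial-time matching algorithm), so the \textsc{Paired Domination Decision} problem on GP$4$-graphs lies in NP.

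For hardness, I would reduce from the \textsc{Minimum Paired Domination} decision problem on general connected graphs without isolated vertices, which is NP-complete by \cite{paired}. Given an instance $(H,k)$ of that problem, construct the GP$4$-graph $G$ from $H$ exactly as in Definition~\ref{defn1}; this adds $4n_H$ new vertices and $4n_H$ new edges, and is plainly carried out in time polynomial in $|V(H)|$. Output the instance $(G, 2n_H + k)$ of \textsc{Paired Domination Decision} on GP$4$-graphs.

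Correctness is immediate from Lemma~\ref{l:lemGP4}, provided the hypothesis $k \le n_H$ of that lemma is met. This is the only point that deserves care: if $k > n_H$, then the instance $(H,k)$ is trivially a ``yes''-instance, because any graph $H$ without isolated vertices admits a PD-set of size at most $n_H$ (indeed $V(H)$ itself is a PD-set once $H$ has no isolated vertices, since a perfect matching of $H$ can be extended by pairing leftover vertices with neighbors, and in any case $\gamma_{pr}(H)\le n_H$). So we may preprocess the input by replacing $k$ with $\min(k, n_H)$ and answering ``yes'' immediately if the original $k$ exceeded $n_H$, leaving only instances with $k \le n_H$ for which Lemma~\ref{l:lemGP4} applies directly.

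The main ``obstacle'' is really just this technicality about the range of $k$; the combinatorial heart of the reduction is already done inside Lemma~\ref{l:lemGP4}. Consequently, the proof in the paper will likely be short, appealing to NP-completeness of \textsc{Paired Domination} on general graphs from \cite{paired} together with the bijection-like correspondence between PD-sets of $H$ and PD-sets of $G$ of matching sizes established in Lemma~\ref{l:lemGP4}.
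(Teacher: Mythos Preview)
Your proposal is correct and follows exactly the paper's route: the paper simply states that the theorem follows directly from Lemma~\ref{l:lemGP4} together with the NP-completeness of paired domination on general graphs from~\cite{paired}. Your treatment is in fact more careful than the paper's (which says nothing about NP-membership or the bound $k\le n_H$); the only slip is the parenthetical claim that $V(H)$ is always a PD-set, which fails when $H$ has no perfect matching (e.g.\ $P_3$), though the needed inequality $\gamma_{pr}(H)\le n_H$ still holds by taking the vertex set of any maximal matching.
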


Next, we define a new graph class and call it GP$5$-graphs.
 \\ \vspace*{.4cm}
\begin{definition}[GP$5$-graph]
\label{defn2}
{\rm
A graph $G=(V,E)$ is called a \emph{GP}$5$-\emph{graph} if it can be obtained from a general connected graph $H=(V_{H},E_{H})$ where $V_{H}=\{v_{1},v_{2},\ldots,v_{n}\}$, by adding a vertex disjoint path $P_5$ for each vertex $v$ of $H$ and joining $v$ to the central vertex of the path. Formally, $V = V_{H} \cup \{ a_{i},b_{i},c_{i},d_{i},e_{i} \mid i \in [n] \}$ and $E=E_{H}\cup \{v_{i}c_{i},c_{i}b_{i},c_{i}d_{i},b_{i}a_{i},d_{i}e_{i}\mid i \in [n]  \}$, where $n$ denotes the number of vertices in $H$.
}
\end{definition}

For example, when $H$ is a $4$-cycle $C_4$, then a GP$5$ graph obtained from $H$ is shown in Fig.~\ref{fig:gp5}. For a GP$5$ graph, we show that the \textsc{Semipaired Domination Decision} problem is NP-complete, but the \textsc{Minimum Paired Domination} problem is easily solvable.

\begin{figure}[htbp]
  \begin{center}
    \includegraphics[width=0.9\textwidth]{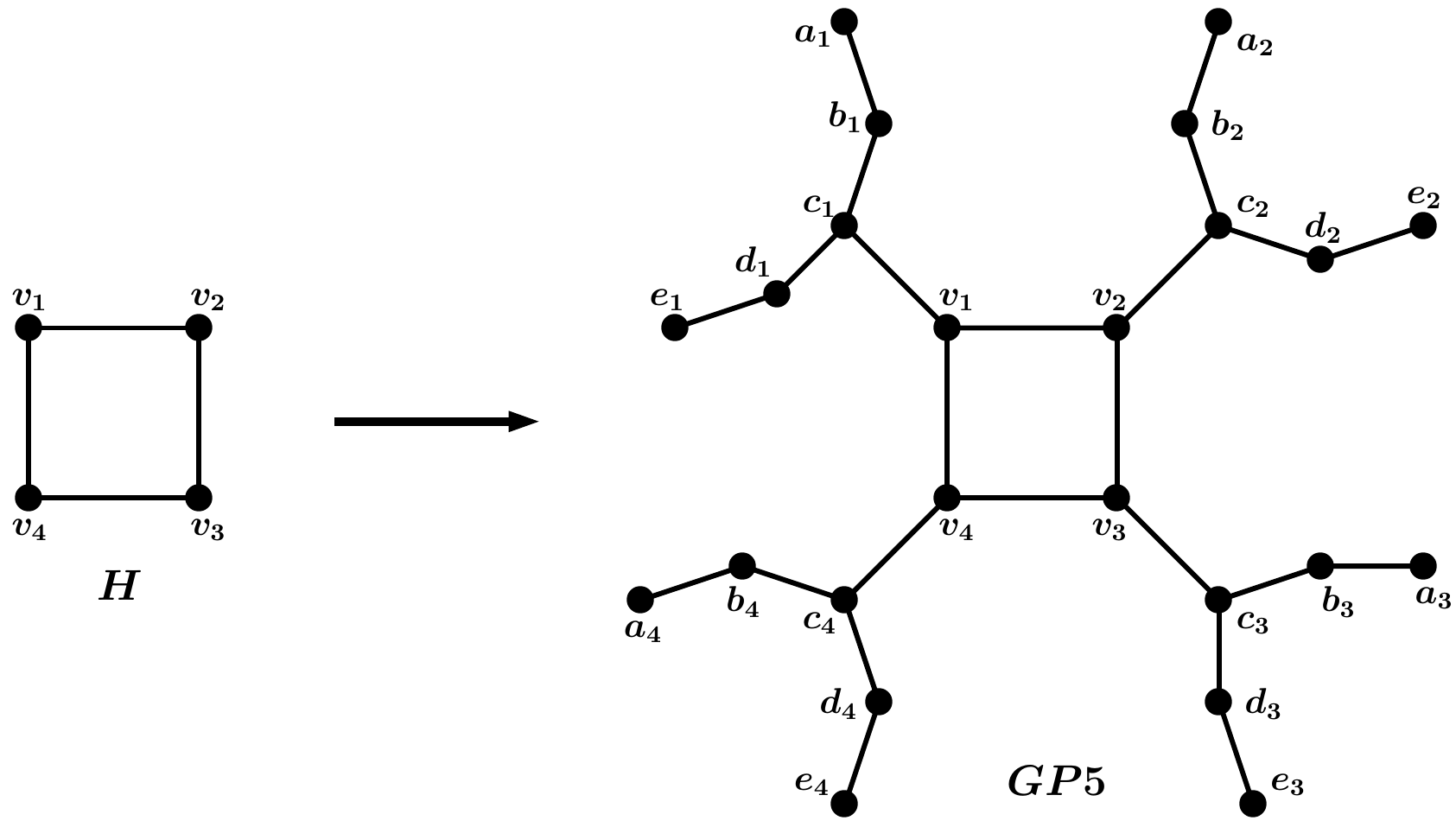}
    \caption{An illustration of a GP$5$ graph obtained from a $4$-cycle.}
    \label{fig:gp5}
  \end{center}
\end{figure}

\begin{theorem}
 \label{t:GP5}
If $G$ is a \emph{GP}$5$-\emph{graph}, then $\gamma_{pr}(G) = \frac{2}{3}|V(G)|$.
\end{theorem}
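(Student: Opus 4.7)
The plan is to establish $\gamma_{pr}(G) = 4n$, where $n = n_H = |V_H|$, using the fact that $|V(G)| = n + 5n = 6n$, so $\tfrac{2}{3}|V(G)| = 4n$. The proof will naturally split into an upper bound by construction and a lower bound by a local analysis on each attached $P_5$.

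For the upper bound, I would simply exhibit an explicit PD-set. Consider the set $S = \{b_i, c_i, d_i, e_i \mid i \in [n]\}$ with pairing $\{b_i,c_i\}$ and $\{d_i,e_i\}$ for every $i$. Each pair is an edge of $G$, so $G[S]$ contains a perfect matching; moreover $c_i$ dominates $v_i$, $b_i$ dominates $a_i$, and the remaining path vertices are dominated by their included neighbors. Hence $S$ is a PD-set of cardinality $4n$, giving $\gamma_{pr}(G) \le 4n$.

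For the lower bound, let $D$ be any PD-set of $G$ and let $L_i = \{a_i,b_i,c_i,d_i,e_i\}$. The key claim is $|D \cap L_i| \ge 4$ for every $i \in [n]$; summing then gives $|D| \ge 4n$. To prove the claim I would do a case analysis driven by two observations: (i) the leaves $a_i$ and $e_i$ force $a_i \in D$ or $b_i \in D$, and $e_i \in D$ or $d_i \in D$; (ii) the only neighbors in $G$ of $a_i$, $b_i$, $d_i$, $e_i$ lying inside $L_i$ are the obvious path-neighbors, so partners of these vertices within $D$ are severely constrained. The four cases according to whether $a_i$ and $e_i$ belong to $D$ are: if $a_i \in D$ then its unique neighbor $b_i$ must be its partner, and symmetrically for $e_i, d_i$; and if $a_i \notin D$ then $b_i \in D$ must be paired with $c_i$ (its only neighbor besides $a_i$), and symmetrically at the other end. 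In each of the three feasible cases one directly identifies four distinct vertices of $D$ inside $L_i$.

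The main obstacle, and where the proof genuinely uses \emph{pairedness} rather than mere domination, is the fourth case $a_i, e_i \notin D$: both $b_i$ and $d_i$ must be in $D$, and the partner argument forces both of them to be paired with $c_i$, which is impossible since a perfect matching pairs $c_i$ with only one vertex. Ruling this case out is where the bound $4$ (as opposed to $3$) comes from, and it is the one step that requires care. After all four cases are handled, the lower bound $|D| \ge 4n$ follows by disjointness of the $L_i$'s, matching the upper bound and completing the proof that $\gamma_{pr}(G) = \tfrac{2}{3}|V(G)|$.
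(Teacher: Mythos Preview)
Your proposal is correct and follows essentially the same approach as the paper: exhibit an explicit PD-set of size $4n$ for the upper bound and argue that every PD-set meets each attached $P_5$ in at least four vertices for the lower bound. Your explicit set $\{b_i,c_i,d_i,e_i\}$ is a symmetric variant of the paper's $\{a_i,b_i,c_i,d_i\}$, and your four-case partner analysis is exactly the justification the paper leaves implicit when it simply asserts that ``every PD-set of $G$ must contain at least four vertices from the set $\{a_i,b_i,c_i,d_i,e_i\}$''; in this sense you have supplied more detail than the paper itself.
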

\begin{proof}
 Let $G$ be a GP$5$-graph of order $n=|V(G)|=6|V_H|$ as constructed in Definition~\ref{defn2}. The set $S=\{a_i, b_i,c_i,d_i \mid i \in [n_H]\}$ is a PD-set of $G$, implying that $\gamma_{pr}(G) \leq \frac{2}{3}|V(G)|$. Also we note that every PD-set of $G$ must contain at least four vertices from the set $\{a_i,b_i,c_i,d_i,e_i\}$ for each $i \in [n_H]$. Hence, $\gamma_{pr}(G) \geq \frac{2}{3}|V(G)|$. Consequently, $\gamma_{pr}(G) = \frac{2}{3}|V(G)|$.
\end{proof}

\begin{lemma}
\label{l:lemGP5}
If $G$ is a \emph{GP}$5$-\emph{graph} constructed from a graph $H$ as in Definition~\ref{defn2}, then $H$ has a semi-PD-set of cardinality~$k$, $k\leq n_H$ if and only if $G$ has a semi-PD-set of cardinality $2 n_H +k$.
\end{lemma}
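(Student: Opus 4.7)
The lemma splits into two directions; the forward implication is constructive, while the reverse requires several local reduction steps.

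Forward direction. Given a semi-PD-set $D_H$ of $H$ with $|D_H|=k$, I set $D = D_H \cup \{b_i, d_i : i \in [n_H]\}$. The pair $\{b_i, d_i\}$ dominates $\{a_i, b_i, c_i, d_i, e_i\}$ and $D_H$ dominates $V_H$, so $D$ dominates $G$. For semipairing, I pair $b_i$ with $d_i$ using the common neighbor $c_i$ (distance $2$ in $G$) and inherit the semipair matching of $D_H$; this inheritance is valid because every $G$-path between two vertices of $V_H$ lies entirely in $V_H$ (the gadget vertices are attached as pendants through $c_i$), so $d_G(u,v)=d_H(u,v)$ for $u,v \in V_H$. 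Hence $|D|=2n_H+k$.

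Reverse direction. Let $D'$ be a semi-PD-set of $G$ with $|D'|=2n_H+k$ and write $D'_i = D'\cap\{a_i,b_i,c_i,d_i,e_i\}$. Since $a_i$'s only neighbor is $b_i$ and $e_i$'s is $d_i$, $|D'_i|\ge 2$ for every $i$. I first normalize $D'$ so that $a_i,e_i\notin D'$ and $\{b_i,d_i\}\subseteq D'_i$, without increasing $|D'|$: each $a_i \in D'$ is swapped for $b_i$ (or deleted when $b_i \in D'$ already), and its semipair partner, which must lie in $\{b_i,c_i\}\cap D'$, is re-attached to $b_i$ using the small diameter of the gadget; the handling of $e_i$ is symmetric. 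After normalization each $D'_i$ is either $\{b_i,d_i\}$ (Type~I) or $\{b_i,c_i,d_i\}$ (Type~II); let $T_2$ denote the Type~II index set.

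I then claim that every $v_i$ is dominated by some vertex of $D' \cap V_H$: if not, then $v_i$ is dominated in $G$ only by $c_i$, forcing Type~II together with $v_i \notin D'$ and $N_H(v_i)\cap D' = \emptyset$. The semipair partner of $c_i$ must then lie in $D' \cap (\{b_i,d_i,v_i\}\cup N_H(v_i)) = \{b_i, d_i\}$, so $c_i$ pairs with, say, $b_i$; but the remaining vertex $d_i$ now has no valid partner, since its distance-$\le 2$ candidates in $G$ are $\{b_i,c_i,e_i,v_i\}$ and each element is either missing from $D'$ or already paired, a contradiction. Setting $D := D' \cap V_H$, I conclude that $D$ dominates $H$ and $|D| = |D'| - 2|T_1| - 3|T_2| = k - |T_2| \le k$.

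The remaining task is to semipair $D$ in $H$. Pairs of the $D'$-matching whose both endpoints are in $V_H$ transfer to $H$ unchanged. Each Type~II gadget contributes exactly one cross pair, whose $V_H$-end $p_i$ lies in $\{v_i\} \cup N_H(v_i)$. To match these $p_i$'s in $H$ I use the freedom in the internal matchings of Type~II gadgets (three admissible configurations each): for each $i \in T_2$ I select a configuration so that $p_i$ equals $v_i$ or a domination witness in $N_H(v_i)\cap D'$, then pair the $p_i$'s using the domination property of $D$, padding $D$ up to cardinality $k$ with auxiliary $V_H$ pairs when needed (the hypothesis $k \le n_H$ guarantees room). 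I expect the main obstacle to be this final matching: the $p_i$'s are only known to lie near their own $v_i$, not near each other, so building the semipair matching in $H$ demands a careful global argument that combines the per-gadget configuration choices with the already-established domination of $H$ by $D$.
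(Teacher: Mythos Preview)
Your forward direction matches the paper. In the reverse direction the final matching step is a genuine gap. The ``three admissible configurations'' of a Type~II gadget let you choose which of $b_i,c_i,d_i$ is cross-paired, but they do not let you choose $p_i$ freely: if $b_i$ or $d_i$ is the cross-paired vertex, its only possible $V_H$-partner is $v_i$ (so you need $v_i\in D'$), and if $c_i$ is cross-paired, its partner must be some specific vertex of $D'\cap N_H[v_i]$ --- re-pairing $c_i$ with a different such vertex breaks that vertex's existing partner elsewhere in the matching. Even if each $p_i$ were freely choosable in $N_H[v_i]\cap D'$, two distinct $p_i,p_j$ need not be within distance~$2$ of one another in $H$, so ``pair the $p_i$'s using the domination property'' and ``padding up to cardinality $k$'' are not yet an argument. (A smaller point: your swap of $a_i$ for $b_i$ does not handle the case where $a_i\in D'$ is already semipaired with $b_i\in D'$; deleting $a_i$ then leaves $b_i$ unmatched with no size-neutral fix described.)

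The paper sidesteps all of this by normalizing more aggressively: it assumes without loss of generality that for every $i$ one has $D'\cap\{a_i,b_i,d_i,e_i\}=\{b_i,d_i\}$ \emph{and} $b_i$ is semipaired with $d_i$. Then $D=D'\setminus\{b_i,d_i:i\in[n_H]\}$ is already a semipaired set contained in $\{v_j,c_j:j\in[n_H]\}$; since two distinct $c$-vertices are at distance at least~$3$ in $G$, every pair of $D$ involving some $c_i$ has the form $\{c_i,w\}$ with $w\in N_H[v_i]$. Each such $c_i$ is then replaced by a vertex $u\in N_H(w)\setminus D$ (pairing $u$ with $w$), or the whole pair $\{c_i,w\}$ is deleted when $N_H(w)\subseteq D$. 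The idea your attempt is missing is precisely this stronger normalization: once $b_i$ is forced to be the partner of $d_i$, there is no Type~II case and no collection of unmatched cross-pair endpoints to reconcile.
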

\begin{proof}
Suppose $D$ is semi-PD-set of $H$ of cardinality at most~$k$. Then the set $D\cup \{b_i,d_i \mid i \in [n_H]\}$ is a semi-PD-set of $G$ of cardinality at most $2 n_H +k$. Conversely, assume that $G$ has a semi-PD-set $D'$ of cardinality at most~$2 n_H +k$. Now we obtain a PD-set of $H$ of size at most $k$ by updating $D'$. In order to dominate $a_i$, the set $D'$ must contain either $a_i$ or $b_i$. Similarly, in order to dominate $e_i$, the set $D'$ must contain either $d_i$ or $e_i$. Without loss of generality, for each $i \in [n_H]$ we may assume that $D'$ contains $\{b_{i},d_{i}\}$ with $b_{i}$ and $d_i$ semipaired in $D'$. This implies that $D'$ contains exactly two vertices from the set $\{a_i,b_i,d_i,e_i\}$, where $i \in [n_H]$. Now we consider the set $D = D' \setminus \{a_i,b_i,d_i,e_i \mid i \in [n_H] \}$.  We note that $|D| \leq k$ and $D\subseteq \{v_{i},c_{i}\mid i \in [n_H]\}$. Further, the set $D$ dominates $V_{H}$, and for every vertex $u$ in $D$, the semipair of $u$ is also present in $D$. Let $G'=G[V(H) \cup \{c_i \mid i \in [n_H]\}]$. If $c_i \in D$ and $c_i$ is semipaired with a vertex $v_i$ such that $N(v_i) \subseteq D$, then we update the set $D$ as $D \setminus \{c_i,v_i\}$ and if there exist a vertex $u \in N(v_i)$ such that $u \notin D$, then we update the set $D$ as $D = (D \setminus \{c_i\}) \cup \{u\}$. We do this update for each $c_i \in D$. Note that the updated set $D$ is a semi-PD-set of $H$ and $|D| \leq k$. Hence, the result follows.
\end{proof}

Since the  \textsc{Semipaired Domination Decision} problem is known to be NP-complete for general graphs~\cite{iwoca}, the following theorem follows directly from Lemma~\ref{l:lemGP5}.

\begin{theorem}
The \textsc{Semiaired Domination Decision} problem is NP-complete for \emph{GP}$4$-\emph{graphs}.
\end{theorem}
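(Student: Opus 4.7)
The plan is to mirror the two-step scheme used by the authors for Lemma~\ref{l:lemGP5}, but applied instead to the GP$4$ construction of Definition~\ref{defn1}. Membership in NP is immediate: a certificate consists of a set $D \subseteq V(G)$ together with a partition of $D$ into $2$-subsets, and one can verify in polynomial time that $D$ dominates $G$ and that the two vertices in each subset lie at distance at most~$2$ in $G$.

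For NP-hardness I would reduce from the \textsc{Semipaired Domination Decision} problem on general connected graphs, which is NP-complete by~\cite{iwoca}. Given an instance $(H,k)$ with $|V_H|=n_H$, build the GP$4$-graph $G$ by appending a length-$4$ path $v_i w_i x_i y_i z_i$ to each $v_i \in V_H$, exactly as in Definition~\ref{defn1}. The central intermediate step is a reduction lemma in the spirit of Lemma~\ref{l:lemGP4}/Lemma~\ref{l:lemGP5} but for semipaired domination: $H$ admits a semi-PD-set of size at most $k$ if and only if $G$ admits a semi-PD-set of size at most $2 n_H + k$. The forward direction is short. Starting from a semi-PD-set $D$ of $H$, form $D' = D \cup \{w_i, y_i \mid i \in [n_H]\}$. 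Since $w_i$ dominates $\{v_i, w_i, x_i\}$, $y_i$ dominates $\{x_i, y_i, z_i\}$, and $d_G(w_i, y_i) = 2$, each pair $\{w_i, y_i\}$ is a valid semipair, the semipairing of $D$ on $V_H$ is preserved, and the resulting set $D'$ is a semi-PD-set of $G$ of size at most $2 n_H + k$.

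The reverse direction is where I expect the main obstacle to lie. Given an arbitrary semi-PD-set $D'$ of $G$ with $|D'| \le 2 n_H + k$, the goal is to strip away exactly two vertices per pendant path and be left with a semi-PD-set of $H$ of size at most $k$. I would first observe that, in order to dominate $w_i$ and $z_i$, the set $D'$ must contain at least two vertices from $\{w_i, x_i, y_i, z_i\}$ for every $i \in [n_H]$. A normalisation step then rearranges the intersection $D' \cap \{w_i, x_i, y_i, z_i\}$ to be precisely $\{w_i, y_i\}$, semipaired with one another, for each $i$. Setting $D = D' \setminus \bigcup_{i \in [n_H]} \{w_i, x_i, y_i, z_i\}$ then gives a subset of $V_H$ of size at most $k$ that dominates $V_H$. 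The delicate point, exactly as in the closing paragraph of Lemma~\ref{l:lemGP5}, is that some $v_i \in D$ may have lost its original semipartner during the peeling step. Repairing this by swapping $v_i$ against a neighbour in $H$, or by pairing $v_i$ with another vertex of $V_H$ already in $D$, without breaking domination of $V_H$ or increasing $|D|$, is the case analysis that needs the most care. Once the reduction lemma is in place, the claimed NP-completeness for GP$4$-graphs follows immediately from the NP-completeness of \textsc{Semipaired Domination Decision} on general graphs.
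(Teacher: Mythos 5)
There is a genuine gap, and it is fatal: the reduction lemma at the heart of your argument is false. By Theorem~\ref{t:GP4}, every GP$4$-graph $G$ satisfies $\gamma_{pr2}(G)=\frac{2}{5}|V(G)|=2n_H$; the set $\{w_i,y_i \mid i\in[n_H]\}$ already dominates all of $V_H$ (each $w_i$ is adjacent to $v_i$) and admits a valid semipairing. Hence $G$ has a semi-PD-set of cardinality at most $2n_H+k$ for \emph{every} $k\geq 0$, regardless of whether $H$ has a semi-PD-set of cardinality at most $k$, so the reverse direction of your biconditional cannot hold. Concretely, after your normalisation step the set $D=D'\setminus\bigcup_{i\in[n_H]}\{w_i,x_i,y_i,z_i\}$ may be empty and need not dominate $V_H$ at all, because the vertices $w_i$ you normalised into $D'$ already do that job. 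This is exactly the structural difference between the two gadgets: in Definition~\ref{defn2} the two forced vertices $b_i,d_i$ are at distance~$2$ from $v_i$ and contribute nothing to dominating $V_H$, whereas in Definition~\ref{defn1} the forced vertex $w_i$ is adjacent to $v_i$.

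The underlying issue is that the theorem as printed contains a typographical error: it should assert NP-completeness for GP$5$-graphs, not GP$4$-graphs. The literal statement is in fact false unless P$=$NP, since Theorem~\ref{t:GP4} makes the decision problem trivially solvable in polynomial time on GP$4$-graphs; the surrounding text (``the following theorem follows directly from Lemma~\ref{l:lemGP5}'') and the stated purpose of the section (semipaired domination is easy on GP$4$-graphs and hard on GP$5$-graphs, with the roles reversed for paired domination) confirm the intended reading. The paper's proof is then the one-line observation that Lemma~\ref{l:lemGP5} gives a polynomial-time reduction from \textsc{Semipaired Domination Decision} on general graphs, which is NP-complete by~\cite{iwoca}. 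Your two-step scheme (NP membership plus a size-shifting reduction lemma) is the right shape, but it must be run on the GP$5$ construction, where it is precisely Lemma~\ref{l:lemGP5}; run on the GP$4$ construction it cannot be repaired.
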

 \vspace*{.5cm}
\section{NP-Completeness Result for Split Graphs}
\label{Sec:3}

In this section, we prove that the \textsc{Semipaired Domination Decision} problem is NP-complete for split graphs. To prove this NP-completeness result, we use a reduction from  the domination problem, which is a well known NP-complete problem~\cite{hhs1}.

\begin{theorem}\label{t:split}
The \textsc{Semipaired Domination Decision} problem is NP-complete for split graphs.
\end{theorem}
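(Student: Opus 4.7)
The plan is to give a polynomial-time reduction from \textsc{Dominating Set}, which is NP-complete on general graphs. Membership in NP is routine: given a candidate set $D$, we verify in polynomial time that $D$ is a dominating set of $G$ and that the auxiliary graph whose vertex set is $D$ and whose edges join pairs of vertices at distance at most~$2$ in $G$ admits a perfect matching.

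Given an instance $(G,k)$ of \textsc{Dominating Set} with $V(G)=\{v_1,\ldots,v_n\}$, I would construct a split graph $G'=(V',E')$ by setting $V'=\{x_1,\ldots,x_n\}\cup\{y_1,\ldots,y_n\}\cup\{a,a',b,b'\}$, with clique $C=\{x_1,\ldots,x_n,a,b\}$ and independent set $I=\{y_1,\ldots,y_n,a',b'\}$. The edges are (i)~all pairs inside $C$; (ii)~the edge $x_iy_j$ whenever $i=j$ or $v_iv_j\in E(G)$; and (iii)~the two pendant edges $aa'$ and $bb'$. The resulting $G'$ is a connected split graph constructible in polynomial time, and the proposed target is $k'=k+2$. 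The two private leaves $a'$ and $b'$ are designed so that every semi-PD-set of $G'$ must contain an anchor from each of the pairs $\{a,a'\}$ and $\{b,b'\}$, which provides a convenient mandatory semipair inside the clique.

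For the forward direction, I would take any dominating set $S\subseteq V(G)$ of size at most~$k$ and verify that $D=\{x_i:v_i\in S\}\cup\{a,b\}$ is a semi-PD-set of $G'$ of size $|S|+2$, semipairing $a$ with $b$ and grouping the chosen $x_i$'s into pairs inside the clique. For the reverse direction, I would let $D$ be any semi-PD-set of $G'$ with $|D|\le k+2$ and consider the projection $A=\{v_i:x_i\in D\}\cup\{v_j:y_j\in D\}$ onto $V(G)$. Since the only neighbors of $y_j$ in $G'$ are those $x_i$ with $v_i\in N_G[v_j]$, a short case analysis shows that $A$ dominates every vertex of $G$. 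The pendants force $|D\cap\{a,a',b,b'\}|\ge 2$, so $|A|\le |D|-2\le k$.

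The principal obstacle I expect is the parity of semi-PD-sets: every such set has even cardinality, so the clean equivalence $\gamma(G)\le k\iff\gamma_{pr2}(G')\le k+2$ fails on the wrong parity of~$k$. I would handle this by first normalizing $k$ to a fixed parity (say, even) in the source instance via a standard polynomial-time padding, for example reducing from the variant of \textsc{Dominating Set} where $k$ is promised to be even, which remains NP-complete by attaching a disjoint $K_2$ when $k$ is odd. Once $k$ is normalized, the forward construction yields a semi-PD-set of even size exactly $k+2$ whenever $\gamma(G)\le k$, and the reverse argument gives $\gamma(G)\le k$ whenever $\gamma_{pr2}(G')\le k+2$, so the two implications combine to the desired equivalence and complete the NP-hardness reduction.
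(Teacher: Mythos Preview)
Your reduction is correct. Both you and the paper reduce from \textsc{Dominating Set} by placing a clique vertex for each $v_i$ and an independent vertex whose closed neighbourhood in $G'$ encodes $N_G[v_i]$, so that domination of the independent layer forces a dominating set of $G$. The difference lies in how the even cardinality of a semi-PD-set is arranged. The paper builds \emph{two} parallel copies of the adjacency structure: the clique is $V_1\cup U_1$ and the independent set is $V_2\cup U_2$, and the equivalence is $\gamma(G)\le k \iff \gamma_{pr2}(G')\le 2k$. Doubling makes the target automatically even and provides a canonical semipairing $v_{i}^{1}\leftrightarrow u_{i}^{1}$, so no parity bookkeeping is needed. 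You instead keep a single copy and bolt on the two pendant gadgets $a,a'$ and $b,b'$ with target $k+2$; this produces a smaller split graph but forces you to normalise $k$ to be even in the source instance and, in the forward direction, to pad $S$ to size exactly $k$ (which in turn tacitly uses $k\le n$, a harmless assumption). Both routes are sound; the paper's duplication trick is cleaner because it dispenses with the parity patch entirely, while your construction is more economical in the size of $G'$.
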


\begin{proof}
Clearly, the \textsc{Semipaired Domination Decision} problem is in NP. To show the hardness, we give a polynomial time reduction from the \textsc{Domination Decision} problem for general graphs. Given a non-trivial graph $G=(V,E)$, where $V=\{v_{i}\mid i\in [n]\}$ and $E=\{e_{j}\mid j\in [m]\}$, we construct a split graph $G'=(V_{G'},E_{G'})$ as follows:

Let $V_{k}=\{v_{i}^{k} \mid i \in [n]\}$ and $U_{k}=\{u_{i}^{k}\mid  i \in [n]\}$ for $k\in [2]$.
Now define $V_{G'}=V_{1}\cup V_{2}\cup U_{1}\cup U_{2}$, and $E_{G'}=\{uv \mid u,v \in V_1 \cup U_1, u\neq v\} \cup \{v_i^2v_j^1,u_i^2u_j^1 \mid  i \in [n]$ and $v_j \in N_G[v_i]\}$. Note that the set $A=V_1 \cup U_1$ is a clique in $G'$ and the set $B=V_2 \cup U_2$ is an independent set in $G'$. Since $V_{G'}=A \cup B$, the constructed graph $G'$ is a split graph. Fig.~\ref{fig:split} illustrates the construction of $G'$ from $G$.
\vspace*{.5 cm}

\begin{figure}[htbp]
  \begin{center}
    \includegraphics[width=0.9\textwidth]{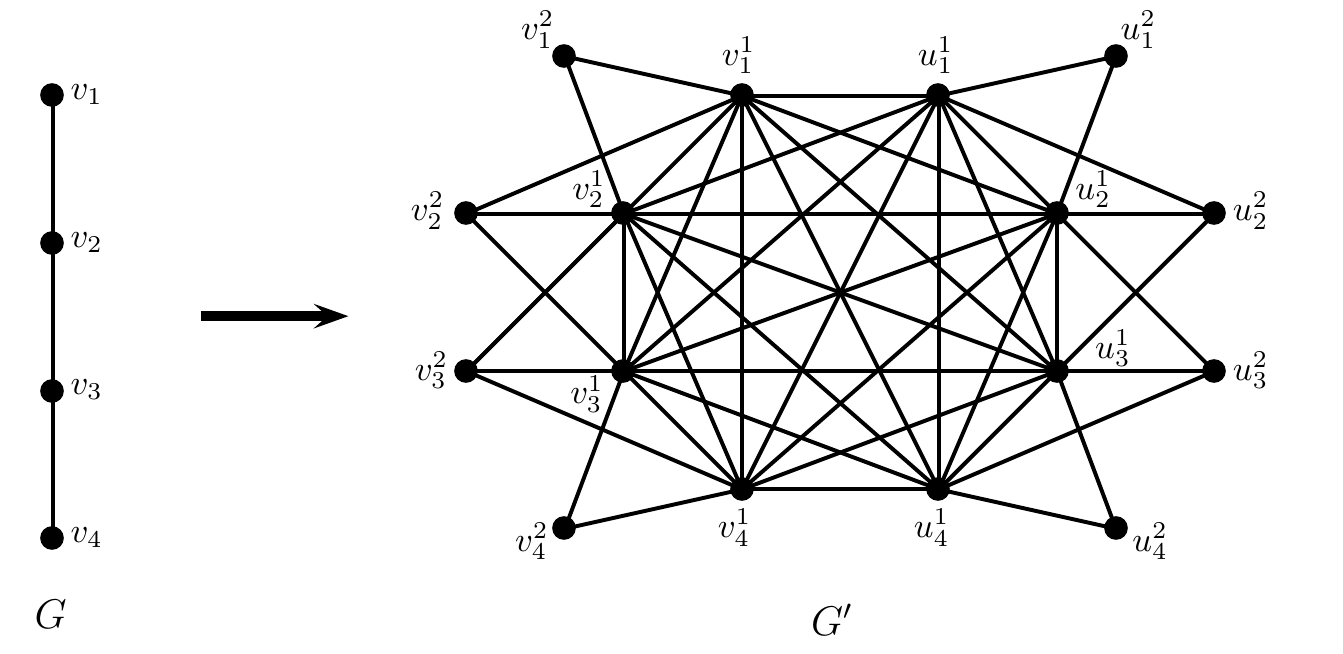}
    \caption{An illustration to the construction of $G'$ from $G$ in the proof of Theorem~\ref{t:split}}
    \label{fig:split}
  \end{center}
\end{figure}

Now, to complete the proof of the theorem, we only need to prove the following claim.

\begin{claim}
\label{c:claim1}
The graph $G$ has a dominating set of cardinality at most $k$ if and only if $G'$ has a semi-PD-set of cardinality at most $2k$.
\end{claim}

\noindent
\begin{proof}
Let $D = \{v_{i_1}, v_{i_2}, \ldots ,v_{i_k}\}$ be a dominating set of $G$ of cardinality at most~$k$. Then the set $D' = \{v_{i_1}^1, v_{i_2}^1, \ldots ,v_{i_k}^1\} \cup \{u_{i_1}^1, u_{i_2}^1, \ldots ,u_{i_k}^1\}$ is a semi-PD-set of $G'$ of cardinality at most $2k$.

Conversely, let $D'$ is a semi-PD-set of $G'$ of cardinality at most $2k$. Now we obtain a dominating set of $G$ of size at most $k$ by updating $D'$. Note that, either $|D' \cap(V_1 \cup V_2)| \leq k$ or $|D' \cap(U_1 \cup U_2)| \leq k$. Without loss of generality, we assume that $|D' \cap(V_1 \cup V_2)| \leq k$. Let $D = D' \cap(V_1 \cup V_2)$. Note that $V_2$ is an independent set, hence any vertex $v \in V_2$ is either dominated by itself or by some vertex in $V_1$. If $v_i^2 \in D$ and none of its neighbors is in $D$, then update $D = (D \setminus \{v_i^2\}) \cup \{u\}$ where $u \in N(v_i^2)$. We do this update for each vertex $v_i^2 \in V_2$. Now observe that in the updated set $D$, we have $N(v_i^2) \cap D \neq \emptyset $ for $i \in [n]$. The set $D'' = \{v_i \mid v_i^1 \in D\}$ is a dominating set of $G$ of cardinality at most~$k$. This proves the claim.
\end{proof}

Hence the result follows.
\end{proof}

\section{Semipaired Domination in Block Graphs}
\label{Sec:4}

 For a graph $G$, a maximal induced subgraph of $G$ without a cut vertex is called a \emph{block} of $G$. If $B$ and $B'$ are two blocks of $G$ then $|V(B) \cap V(B')| \leq 1$, and a vertex $v \in V(B) \cap V(B')$ if and only if $v$ is a cut vertex. A connected graph whose every block is a complete graph is called a \emph{block graph}. A tree is a block graph in which every block contains exactly two vertices. A block with only one cut vertex is called an \emph{end block}. Every block graph not isomorphic to a complete graph has at least two end blocks.

Lie Chen et al.~\cite{paired1} have studied an ordering of vertices of block graph, $\alpha =(v_1, v_2, \ldots , v_n)$ such that $v_iv_j \in E$ and $v_iv_k \in E$ implies $v_jv_k \in E$ for $i<j<k\leq n$. Such an ordering of vertices of a block graph is called \emph{Block-Elimination-Ordering$ \, ($BEO$)$}. The procedure to get such an ordering is as follows: if $G$ is not isomorphic to a complete graph, then it must have at least two end blocks. Pick an end block, say $B$, having a cut vertex $x$. Staring with the index $1$, enumerate the vertices in $V(B) \setminus \{x\}$ in any order and remove $V(B)\setminus \{x\}$ from the graph. Let $k = \max \{s \mid v_s \in V(B)\setminus \{x\}\}$, that is, $v_k$ is the vertex in $V(B)\setminus \{x\}$ having highest index. Now if the remaining graph, say $G'$, is a complete graph, then enumerate the remaining vertices starting from index $k+1$ to $n$ in any order; otherwise, pick an end block in $G'$, say $B'$, having cut vertex $x'$. Starting with index $k+1$ enumerate the vertices in $V(B') \setminus \{x'\}$ and continue the procedure in $G' \setminus (V(B') \setminus \{x'\})$.

 Let  $G=(V,E)$ be a block graph, and  $\alpha= (v_1, v_2, \ldots ,v_n)$ be a BEO of vertices of $G$. For $i\neq n$, we define $F(v_i)=v_j$, where $j = \max \{k \mid v_iv_k \in E\}$. We also define $F(v_{n})=v_n$. Further, we construct a block tree $T(G)$ rooted at $v_n$ such that $V(T(G))=V(G)$ and $E(T(G))=\{uv$ if and only if either $F(u)=v$ or $F(v)=u\}$. Fig.~\ref{fig:mix} illustrates the construction of $T(G)$ from a block graph $G$. Note that a cut vertex of $G$ is an internal vertex of $T(G)$. Also if $B$ is a block of $G$ with $V(B) = \{u_{i_1}, u_{i_2}, \ldots, u_{i_k}\}$ where $u_{i_k}$ is the highest index vertex in $V(B)$, then $u_{i_1}, u_{i_2}, \ldots, u_{i_{k-1}}$ are called \emph{siblings}  in $T(G)$ and each one is a child of $u_{i_k}$. The following complexity is already known.

 \begin{theorem}\label{th:6}
 \cite{paired1}~For a block graph $G=(V,E)$, a BEO can be computed in O$(n+m)$-time. In addition, given a BEO, the corresponding block tree can also be computed in O$(n+m)$-time.
 \end{theorem}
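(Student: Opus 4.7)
The plan is to obtain both outputs by first computing the block-cut tree $\mathcal{B}(G)$ of $G$ and then using it to drive a linear-time peeling of end blocks. I would invoke the classical Hopcroft--Tarjan depth-first search for biconnected components, which runs in $O(n+m)$ on any graph; for a block graph it returns all maximal cliques of $G$ together with the set of cut vertices, from which $\mathcal{B}(G)$ is built by joining each block-node to its incident cut-vertex nodes. Throughout, I maintain for every cut vertex $x$ a list of the blocks containing $x$ and for every block-node its degree in $\mathcal{B}(G)$.

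To produce the BEO, initialize a queue $Q$ with the end blocks of $G$, i.e.\ the leaves of $\mathcal{B}(G)$, and set the current index $i=1$. Repeatedly extract $B\in Q$ with its unique cut vertex $x$, assign indices $i,i+1,\ldots,i+|V(B)|-2$ to the vertices of $V(B)\setminus\{x\}$ in arbitrary order, advance $i$ by $|V(B)|-1$, delete $B$ from $\mathcal{B}(G)$, and decrement the degree counter of $x$. If that counter drops to $1$, push the unique remaining block at $x$ into $Q$; if it drops to $0$, the surviving component at $x$ is a single clique whose vertices are enumerated last, with $x$ receiving the maximum index. Correctness of the BEO implication follows from the peeling order: once $v_i$ is assigned, any higher-indexed neighbors $v_j,v_k$ of $v_i$ must still coexist with $v_i$ inside the block being peeled, hence $v_jv_k\in E$.

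Given the ordering $\alpha=(v_1,\ldots,v_n)$, the function $F$ and the tree $T(G)$ are obtained by one linear pass: for each $v_i$, scan its adjacency list once, take the neighbor of maximum BEO-index as $F(v_i)$, and if $i\ne n$ add the edge $v_iF(v_i)$ to $T(G)$. This contributes $O(\deg_G(v_i))$ work per vertex, summing to $O(n+m)$. The result is a tree on $n$ vertices because every $v_i\ne v_n$ contributes exactly one edge that strictly increases the index, giving $n-1$ edges with no cycles.

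The main obstacle is amortizing the peeling step so that the total time stays linear: we must detect the creation of a new end block at a cut vertex $x$ the instant the removal of $B$ leaves $x$ incident to a single remaining block, without rescanning $\mathcal{B}(G)$. The degree-counter bookkeeping together with the per-cut-vertex block list resolves this with constant work per block deletion, keeping the whole procedure within the promised $O(n+m)$ bound.
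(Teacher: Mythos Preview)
The paper does not prove this theorem at all: it is quoted verbatim from \cite{paired1} and carries no argument in the present paper. So there is no in-paper proof to compare against; your write-up supplies what the authors outsource to the cited reference.

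Your overall strategy is the natural one and is correct in outline: compute the block--cut tree via Hopcroft--Tarjan, peel end blocks to produce the BEO, and then build $T(G)$ by a single adjacency-list scan to determine each $F(v_i)$. One genuine (if easily reparable) gap sits in the peeling step. When the counter of a cut vertex $x$ drops to $1$, it is \emph{not} correct to immediately push the unique remaining block $B'$ at $x$ into $Q$: the block $B'$ may still contain other cut vertices and hence not yet be an end block. What is needed is the standard two-sided leaf-peeling on the bipartite block--cut tree: maintain, for each block node, the number of its vertices that are currently cut vertices; when $x$'s counter hits $1$, the vertex $x$ ceases to be a cut vertex, so you decrement $B'$'s counter, and only enqueue $B'$ once \emph{that} counter reaches $1$ (or $0$, signalling the final block). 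This is still $O(1)$ amortized per edge of $\mathcal{B}(G)$, so the linear bound survives. Your ``counter drops to $0$'' clause is also slightly off as written, since a cut vertex always lies in at least one surviving block until the very last block is processed; the correct termination is simply to enumerate all vertices of the last remaining block once the queue would otherwise empty.
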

\vspace*{.8cm}
 \begin{figure}[htbp]
  \begin{center}
    \includegraphics[width=1 \textwidth]{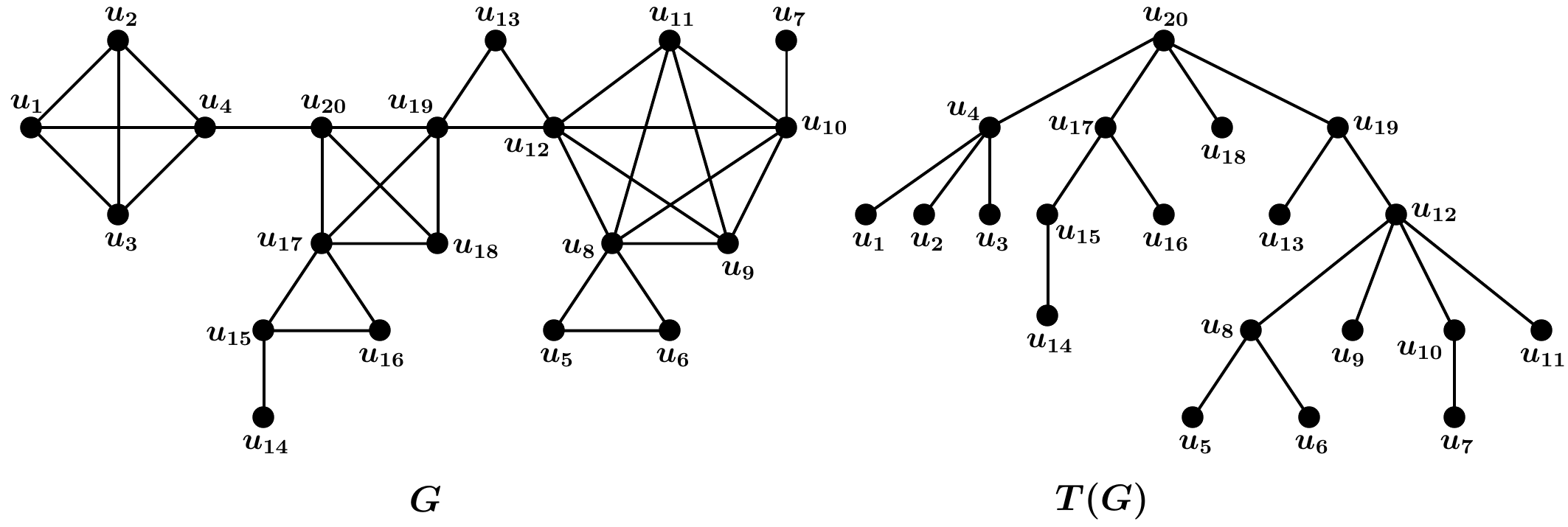}
    \caption{An illustration of the construction of $T(G)$ from a block graph $G$.}
    \label{fig:mix}
  \end{center}
\end{figure}

\begin{obs}
 \label{ob:block1}
Let $G=(V(G), E(G))$ be a block graph and let $T(G)$ be a corresponding block tree. If $v \in N_G(u)$, then one of the following holds in $T(G)$,
\begin{enumerate}
\item[{\rm (a)}] $u$ is a parent of $v$.
\item[{\rm (b)}] $u$ is a child of $v$.
\item[{\rm (c)}] $u$ is a sibling of $v$.
\end{enumerate}
\end{obs}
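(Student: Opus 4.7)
The plan is to leverage the block-elimination ordering $\alpha$ and the pointer function $F$ directly. Assume without loss of generality that $u = v_i$ and $v = v_j$ with $i < j$, so that $v$ has the larger index along $\alpha$. Since $uv \in E$, the maximum-index neighbor of $u$ satisfies $F(u) = v_\ell$ with $\ell \geq j$. The two natural cases $\ell = j$ and $\ell > j$ will correspond respectively to $u$ being a child of $v$ in $T(G)$ (case (b)) and to $u$ and $v$ being siblings in $T(G)$ (case (c)); case (a) is then obtained by swapping the roles of $u$ and $v$.

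First I would dispose of the easy case $\ell = j$: here $F(u) = v$, so by the construction of $T(G)$ the edge $uv$ lies in $T(G)$ with $v$ as the parent of $u$, so that $u$ is a child of $v$, giving case (b). The reversed situation, in which originally $i > j$, is symmetric and yields case (a).

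The substantive case is $\ell > j$. Applying the defining BEO property to $v_i v_j \in E$ and $v_i v_\ell \in E$ with $i < j < \ell$ yields $v_j v_\ell \in E$. Writing $F(v) = v_m$, we have $m \geq \ell$ because $v_j$ is adjacent to $v_\ell$, and the goal reduces to proving $m = \ell$. Once this is established, $F(u) = F(v) = v_\ell$, so $u$ and $v$ share the common parent $v_\ell$ in $T(G)$ and are siblings, giving case (c).

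The main obstacle, and the step that genuinely invokes the block-graph structure, is ruling out $m > \ell$. Suppose for contradiction that $m > \ell$. Applying the BEO property again to $v_j v_\ell \in E$ and $v_j v_m \in E$ gives $v_\ell v_m \in E$, and hence $\{v_i, v_j, v_\ell\}$ and $\{v_j, v_\ell, v_m\}$ are both triangles of $G$, lying in single blocks $B$ and $B'$ respectively. Since $B$ and $B'$ share the two distinct vertices $v_j$ and $v_\ell$, while any two distinct blocks of a block graph intersect in at most one (cut) vertex, we conclude $B = B'$. Hence $v_m \in V(B)$, and since $B$ is a clique, $v_i v_m \in E$, contradicting the maximality of $\ell$ in the definition $F(u) = v_\ell$. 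This contradiction closes the case analysis and proves the observation.
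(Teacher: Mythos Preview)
The paper states this observation without proof, so there is no authorial argument to compare against. Your proof is correct and self-contained: the BEO property forces $v_jv_\ell\in E$, and the block argument (two triangles sharing the edge $v_jv_\ell$ must lie in a single block, whence $v_iv_m\in E$ would contradict $F(v_i)=v_\ell$) cleanly rules out $m>\ell$.

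One minor remark to align with the paper's precise wording of ``sibling'': the paper defines siblings as the non--maximal-index vertices of a single block $B$. Your argument already yields the triangle $\{v_i,v_j,v_\ell\}$ inside one block $B$; it is worth noting explicitly that $v_\ell$ is the highest-index vertex of $B$ (otherwise some $v_p\in V(B)$ with $p>\ell$ would satisfy $v_iv_p\in E$, again contradicting $F(v_i)=v_\ell$). With that, $u=v_i$ and $v=v_j$ are among the lower-index vertices of $B$ and hence siblings in exactly the paper's sense.
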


\begin{lemma}
\label{l:lemma3}
Let $G$ be a block graph with given BEO, $\alpha =( v_1, v_2, \ldots ,v_n)$. If $B$ and $B'$ are any two blocks of $G$ such that $v_i \in V(B) \cap V(B')$, then $F(u)=v_i$, for all $u \in V(B) \setminus \{v_i\}$ or for all $u \in V(B') \setminus \{v_i\}$.
\end{lemma}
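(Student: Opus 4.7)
The plan is to split the argument into two cleanly separated steps: first show that $v_i$ is the highest-indexed vertex in at least one of $V(B)$ and $V(B')$, and then show that in the chosen block every other vertex has $v_i$ as its highest-indexed neighbour. Both steps will be driven by the BEO property together with two easy structural facts about block graphs: two distinct blocks share at most one vertex, and every triangle lies in a single block (both are immediate from the fact that the blocks of a block graph are exactly its maximal cliques).

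For the first step I would argue by contradiction. Let $v_p$ and $v_{p'}$ be the highest-indexed vertices of $V(B)$ and $V(B')$ respectively, and suppose $p>i$ and $p'>i$. Since $v_iv_p\in E$ and $v_iv_{p'}\in E$ with $i<p$ and $i<p'$, the BEO property gives $v_pv_{p'}\in E$. Hence $\{v_i,v_p,v_{p'}\}$ is a triangle of $G$ and so lies in a single block $B^{\star}$. Then $V(B^{\star})\cap V(B)\supseteq\{v_i,v_p\}$ forces $B^{\star}=B$, whence $v_{p'}\in V(B)\cap V(B')=\{v_i\}$; but this means $v_{p'}=v_i$, contradicting $p'>i$. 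So $v_i$ is the highest-indexed vertex of $V(B)$ or of $V(B')$, and without loss of generality this is $V(B)$.

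For the second step, fix any $u\in V(B)\setminus\{v_i\}$ and let $j$ be its index; since $v_i$ is the highest-indexed vertex in $V(B)$ we have $j<i$. Neighbours of $u$ inside $V(B)$ all have index at most $i$ and include $v_i$, so $v_i$ is the highest-indexed neighbour of $u$ within $V(B)$. Suppose for contradiction $u$ has a neighbour $w\notin V(B)$ whose index $k$ satisfies $k>i$ (equality with $i$ is impossible because $v_i\in V(B)$ while $w\notin V(B)$). Then $j<i<k$ and $uv_i,\,uw\in E$, so the BEO property yields $v_iw\in E$. Hence $\{u,v_i,w\}$ is a triangle lying in a single block $B^{\star}$; since $u,v_i\in V(B^{\star})\cap V(B)$, the blocks $B^{\star}$ and $B$ share at least two vertices, forcing $B^{\star}=B$ and hence $w\in V(B)$, a contradiction.

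Therefore every neighbour of $u$ outside $V(B)$ has index strictly less than $i$, and combined with the analysis inside $V(B)$ this shows that $v_i$ is the highest-indexed neighbour of $u$; hence $F(u)=v_i$ for every $u\in V(B)\setminus\{v_i\}$. The main obstacle is conceptual rather than computational: spotting that the BEO property turns a hypothetical high-indexed neighbour into a triangle that the block-graph structure cannot accommodate because distinct blocks share at most one vertex. Once this triangle-collapsing idea is in place, the rest is just book-keeping with the BEO definition.
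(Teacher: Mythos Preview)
Your proof is correct, and it takes a genuinely different route from the paper's. The paper argues directly from the \emph{procedure} that builds a BEO: it observes that the enumeration processes end blocks one at a time, so all of $V(B)\setminus\{v_i\}$ is listed before (or after) all of $V(B')$, and then asserts $F(u)=v_i$ for the block that was processed first. By contrast, you work purely from the defining \emph{algebraic} property of a BEO (that $v_av_b,v_av_c\in E$ with $a<b<c$ forces $v_bv_c\in E$) together with the fact that in a block graph maximal cliques coincide with blocks and distinct blocks meet in at most one vertex. Your triangle-collapsing device handles both steps uniformly and yields the conclusion for \emph{any} ordering satisfying the BEO property, not just one produced by the specific procedure; it also makes explicit the second half of the argument (why $u$ cannot have a higher-indexed neighbour outside $V(B)$), which the paper's proof leaves implicit. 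The cost is a slightly longer write-up, but the gain is a cleaner, self-contained argument that does not depend on how the ordering was generated.
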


\noindent
\begin{proof}
Let $G$ be a block graph with given BEO, $\alpha =( v_1, v_2, \ldots ,v_n)$. Suppose $B$ and $B'$ are blocks of $G$ such that $v_i \in V(B) \cap V(B')$. Clearly, $v_i$ is a cut vertex. By the way the vertices of $G$ are enumerated, either all the vertices in $V(B) \setminus \{v_i\}$ first get enumerated and thereafter the vertices in $V(B')$, or all the vertices in $V(B') \setminus \{v_i\}$ first get enumerated and then the vertices in $V(B)$. Renaming the blocks if necessary, we may assume without loss of generality that the vertices in $V(B) \setminus \{v_i\}$ get enumerated first and thereafter the vertices in $V(B')$. In that case, we note that $i \geq \max \{r \mid v_r \in V(B)\setminus \{v_{i}\}\}$. Thus, $F(u)=v_i$ for all $u \in V(B) \setminus \{v_i\}$, implying the desired result.
\end{proof}

Using above lemma, we state that if there are exactly $k$ blocks, say $B_1, B_2 \ldots B_k$ of a block graph $G$ such that $v_i \in V(B_1) \cap V(B_2) \cap \cdots \cap V(B_k)$ and $B^* \in \{B_1, B_2 \ldots B_k\}$ is the block whose vertices are enumerated after the enumeration of vertices in $\{V(B_1) \cup V(B_2) \cup \cdots \cup V(B_k)\} \setminus \{V(B^*)\}$ in the BEO $\alpha$ then for every vertex $v_j \in \{V(B_1) \cup V(B_2) \cup \cdots \cup V(B_k)\} \setminus \{V(B^*)\}$, $F(v_j) = v_i$.

\begin{lemma}
\label{l:lem4}
Let $G$ be a block graph and let $T(G)$ be a block tree of $G$. If $\alpha=(v_n,v_{n-1}, \ldots, v_1)$ is a BFS-ordering of the vertices of $T(G)$ rooted at $v_n$ then the reverse of BFS-ordering $\beta=(v_1,v_{2}, \ldots, v_{n})$ also satisfy BEO in $G$.
\end{lemma}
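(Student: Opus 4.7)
The plan is to verify that $\beta$ satisfies the defining BEO condition: for every triple of indices $i<j<k$ with $v_iv_j,v_iv_k \in E(G)$, one must show $v_jv_k \in E(G)$. First I would translate the BFS hypothesis into a depth statement on $T(G)$: since $\alpha=(v_n,v_{n-1},\ldots,v_1)$ is a BFS-ordering rooted at $v_n$, depths along $T(G)$ are non-decreasing in BFS order, so vertices of smaller $\beta$-index sit at depth at least as large as those of larger $\beta$-index. In particular, the parent of $v_i$ in $T(G)$ has $\beta$-index strictly greater than $i$, whereas every child of $v_i$ in $T(G)$ has $\beta$-index strictly less than $i$.

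Fix $i<j<k$ with $v_iv_j,v_iv_k \in E(G)$ and apply Observation~\ref{ob:block1} to each edge. The ``child of $v_i$'' alternative is ruled out for both $v_j$ and $v_k$ by the depth statement above, so each of $v_j,v_k$ is either the (unique) parent of $v_i$ in $T(G)$ or a sibling of $v_i$ in $T(G)$. Since $v_i$ has only one parent, the analysis splits into two essentially different cases.

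In the first case, one of $v_j,v_k$ is the parent of $v_i$, say $v_j=F(v_i)$, and the other is a sibling; the symmetric subcase $v_k=F(v_i)$ is handled identically. By the paper's definition of sibling, $v_i$ and $v_k$ belong to a common block $B^{\star}$ of $G$ whose maximum-index vertex is their shared parent in $T(G)$, which must be $v_j$. Hence $v_j\in V(B^{\star})$ together with $v_k$, and the clique structure of $B^{\star}$ forces $v_jv_k\in E(G)$. In the second case, both $v_j$ and $v_k$ are siblings of $v_i$ in $T(G)$, so $F(v_j)=F(v_i)=F(v_k)=v_p$ for some vertex $v_p$; then $v_i,v_j$ lie in a block $B_1$ with $v_p$ as maximum-index vertex, and $v_i,v_k$ lie in a block $B_2$ with $v_p$ as maximum-index vertex. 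Because $B_1$ and $B_2$ both contain the two vertices $v_i$ and $v_p$, and any two distinct blocks of a graph meet in at most one vertex, we obtain $B_1=B_2$; hence $v_j$ and $v_k$ lie in this common block and $v_jv_k\in E(G)$.

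The main technical point is the step in the second case: one must notice that the common value of $F$ forces the two blocks supplying the sibling relations to coincide, via the standard fact that distinct blocks share at most one vertex. Everything else is routine bookkeeping about the reverse BFS ordering once the depth translation is in hand, and combining the two cases above yields the BEO property for $\beta$.
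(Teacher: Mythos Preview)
Your proof is correct but follows a different route from the paper's. The paper argues by contradiction: assuming $v_jv_k \notin E(G)$, the vertices $v_j$ and $v_k$ lie in distinct blocks $B,B'$ meeting at the cut vertex $v_i$, and then Lemma~\ref{l:lemma3} forces $F(u)=v_i$ for all $u$ in one of $V(B)\setminus\{v_i\}$ or $V(B')\setminus\{v_i\}$, making $v_i$ the $T(G)$-parent of $v_j$ or of $v_k$; this contradicts the reverse-BFS depth inequality. You instead apply Observation~\ref{ob:block1} directly to each of the edges $v_iv_j$, $v_iv_k$, eliminate the ``child'' alternative by the same depth argument, and then dispose of the remaining parent/sibling configurations using the standard fact that two distinct blocks share at most one vertex. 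Both arguments are short; yours bypasses the auxiliary Lemma~\ref{l:lemma3} at the price of a small case split, while the paper's version packages the block analysis into that lemma and obtains a one-line contradiction.
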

\begin{proof}
Let $G$ be a block graph and $T(G)$ be the corresponding block tree. Let  $\beta=(v_1,v_{2}, \ldots, v_{n})$ be the reverse of BFS-ordering $\alpha=(v_n,v_{n-1}, \ldots, v_1)$ of vertices of $G$ as they appear in $T(G)$. For $i < j <k$, let $v_i$, $v_j$ and $v_k$ satisfy the reverse of BFS-ordering and $v_iv_j, v_iv_k \in E(G)$. To prove the result we need to show $v_jv_k \in E(G)$. By contradiction, suppose $v_jv_k \notin E(G)$. Since $v_jv_k \notin E(G)$, this implies $v_j$ and $v_k$ belongs to different blocks of $G$, say $B$ and $B'$ respectively, and $v_i$ is a cut vertex. Now, using Lemma~\ref{l:lemma3} we have $F(u)=v_i$, for all $u \in V(B) \setminus \{v_i\}$ or for all $u \in V(B') \setminus \{v_i\}$. Therefore in $T(G)$, the vertex $v_i$ is the parent of either $v_j$ or $v_k$, which is a contradiction as $v_i$, $v_j$ and $v_k$ satisfy the reverse of BFS-ordering and $i < j < k$. Hence the result follows.
\end{proof}

Let $G=(V(G),E(G))$ be a block graph and $T(G)$ be a corresponding block tree of $G$. Let $\alpha=(v_n,v_{n-1}, \ldots, v_1)$ be a $BFS$-ordering of vertices of $T(G)$ rooted at $v_n$. Recall that the distance between two vertices $u$ and $v$ is the length of shortest path between $u$ and $v$, denote by $d(u,v)$. For a positive integer $l \geq 0$, we say a vertex $x$ is at level $l$ in a tree $T$ rooted at a vertex $y$, if $d_{T}(x,y) =l$. In our algorithm, we will process the vertices of the block graph as they appear in the reverse of $BFS$-ordering $\beta=$ $(v_1, v_{2}, \ldots , v_n)$ of the corresponding block tree $T(G)$. We will use the following notation while processing the vertices in the algorithm:

\[
  D(v_i) =
  \begin{cases}
     0 & \text{if $v_i$ is not dominated}, \\
     1 & \text{if $v_i$ is dominated}. \\
  \end{cases}
\]

 \[
  L(v_i) =
  \begin{cases}
     0 & \text{if $v_i$ is not selected}, \\
     1 & \text{if $v_i$ is selected but not semipaired}, \\
     2 & \text{if $v_i$ is selected and semipaired}.\\
  \end{cases}
\]

 \[  m(v_i) =
  \begin{cases}
     k & {\text{if $v_k$ need to be semipaired with a vertex in $N_{T(G)}[v_i]$ or with some sibling of $v_i$}}, \\
     0 & \text{otherwise}. \\
  \end{cases}
\]

Also we use $N_i(v_k)=\{v_j \mid v_kv_j \in E(G)$ and $ j \geq i \}$ and $N_i[v_k]=\{v_j \mid v_kv_j \in E(G)$ and $ j \geq i \}\cup \{v_{k}\}$.

\begin{lemma}
 \label{l:lem5}
Let G be a block graph, and $T(G)$ be a block tree of $G$. Let $\beta=(v_{1},v_{2},\ldots,v_{n})$ be the reverse of $BFS$-ordering of vertices in $T(G)$, then $v_iv_j \in E(G)$ implies $N_i[v_i] \subseteq N_i[v_j]$ where $j \geq i$.
\end{lemma}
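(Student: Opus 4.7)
The plan is to derive the desired inclusion directly from the BEO property of $\beta$ established in Lemma~\ref{l:lem4}; once $\beta$ is known to satisfy the BEO condition, the statement of Lemma~\ref{l:lem5} is a short case analysis, essentially an unfolding of that condition.

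First I would pick an arbitrary $v_k \in N_i[v_i]$. By definition, $k \geq i$ and either $v_k = v_i$ or $v_iv_k \in E(G)$; the goal is to show $v_k \in N_i[v_j]$, i.e.\ $v_k = v_j$ or $v_jv_k \in E(G)$. In the trivial subcase $v_k = v_i$, the hypothesis $v_iv_j \in E(G)$ with $j \geq i$ immediately gives $v_k = v_i \in N_i[v_j]$.

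In the remaining subcase we have $v_iv_k \in E(G)$, so $k > i$, and the hypothesis $v_iv_j \in E(G)$ similarly forces $j > i$. If $j = k$ there is nothing to prove. Otherwise, set $a = \min\{j,k\}$ and $b = \max\{j,k\}$, so that $i < a < b$ with $v_iv_a, v_iv_b \in E(G)$. Applying the BEO property of $\beta$ (guaranteed by Lemma~\ref{l:lem4}) to the triple $(v_i, v_a, v_b)$ yields $v_av_b \in E(G)$, that is, $v_jv_k \in E(G)$, and since $k \geq i$ this shows $v_k \in N_i[v_j]$.

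I do not expect any genuine obstacle here; the only substantive ingredient is Lemma~\ref{l:lem4}, and the remaining verification is bookkeeping on the indices. The one place to be careful is to keep the strict/weak inequalities straight (in particular to separate out $v_k = v_i$ and $j = k$ before invoking the BEO condition, which requires three distinct indices).
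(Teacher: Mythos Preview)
Your argument is correct. It is in fact a bit cleaner than the paper's own proof: the paper splits into cases according to whether $v_j$ is the parent or a sibling of $v_i$ in $T(G)$, handles the parent case by direct structural reasoning about blocks, and only invokes the BEO property of $\beta$ (Lemma~\ref{l:lem4}) in the sibling--sibling subcase. You instead apply the BEO condition uniformly to the triple $\{v_i,v_j,v_k\}$ and avoid the tree-structure case analysis altogether. Both routes rest on Lemma~\ref{l:lem4}; yours just uses it as the single tool, which makes the proof shorter and arguably more transparent, while the paper's version makes the underlying block-tree picture a bit more visible.
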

\noindent
\begin{proof}
If $v_j$ is parent of $v_i$ in $T(G)$, then  a neighbor $v_k$ of $v_i$ with $k>i$ is a sibling of $v_i$ in $T(G)$. Hence, $v_kv_j \in E(G)$ and the result follows. If $v_j$ is a sibling of $v_i$ and $v_iv_j \in E(G)$, then $v_i$ and $v_j$ are in the same block $B$ of $G$. Now any neighbor $v_k$ of $v_i$ with $k>i$ is either a sibling or parent of $v_i$ in $T(G)$. If $v_k$ is parent of $v_i$ then $v_jv_k \in E(G)$ (as $v_j$ is a sibling of $v_i$). Next we consider that  $v_k$ is a sibling of $v_i$. Note that either $i<j<k$ or $i<k<j$, and the vertices $v_i, v_j,v_k$ appears in the reverse of BFS-ordering, which is also a BEO (by Lemma~\ref{l:lem4}). Since $v_iv_k, v_iv_j \in E(G)$, $v_jv_k \in E(G)$ by using the property of BEO.
\end{proof}

\begin{obs}
Let G be a block graph, and let $T(G)$ be a block tree of $G$. If $\beta=(v_{1},v_{2},\ldots,v_{n})$ is the reverse of $BFS$-ordering of vertices in $T(G)$, then for any descendant $v_r$ of $v_j$, $N_i(v_r) \subseteq N_i(v_j)$.
\end{obs}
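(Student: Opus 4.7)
The plan is to proceed by induction on the length $\ell$ of the unique $v_r$--$v_j$ path in the rooted block tree $T(G)$. Because $\beta$ is the reverse of a BFS-ordering of $T(G)$ with root $v_n$, descendants always carry strictly smaller indices than their ancestors, so the path $v_r = u_0, u_1, \ldots, u_\ell = v_j$ has strictly increasing indices $r = s_0 < s_1 < \cdots < s_\ell = j$, where $s_t$ denotes the index of $u_t$. I would treat the base case $\ell = 1$ separately and then chain the containments along the intermediate ancestors.

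In the base case, $v_r$ is a child of $v_j$ in $T(G)$, giving $v_rv_j \in E(G)$ with $r < j$. Lemma~\ref{l:lem5} applied to the edge $v_rv_j$ yields $N_r[v_r] \subseteq N_r[v_j]$, and I would upgrade this to the open form $N_i(v_r) \subseteq N_i(v_j)$ for an arbitrary subscript~$i$ by classifying each $v_k \in N_i(v_r)$ via Observation~\ref{ob:block1} as a parent, child, or sibling of $v_r$ in $T(G)$; for each such type I would then argue that $v_k$ lies in a common block of $G$ with $v_j$, so $v_kv_j \in E(G)$, while the index constraint $k \geq i$ is preserved. For the inductive step, let $u$ denote the parent of $v_r$ in $T(G)$; then $u$ is itself a descendant of $v_j$ at distance $\ell - 1$, the base case applied to the edge $v_ru$ yields $N_i(v_r) \subseteq N_i(u)$, and the inductive hypothesis applied to the descendant $u$ of $v_j$ gives $N_i(u) \subseteq N_i(v_j)$. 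Chaining the two inclusions produces the desired containment $N_i(v_r) \subseteq N_i(v_j)$.

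The principal obstacle I anticipate lies in the careful bookkeeping between open and closed neighborhoods in the base case: Lemma~\ref{l:lem5} is phrased for $N_i[\cdot]$, so extracting the open-form inclusion requires verifying that the ``center'' vertices $v_r$ and $v_j$ do not disrupt the containment. Here the BEO property established in Lemma~\ref{l:lem4} is essential, since it guarantees that every index-$\geq i$ neighbor of $v_r$ which is not captured by the parent-edge $v_rv_j$ itself must be a sibling of $v_r$ in a block containing $v_j$, and therefore belongs to $N_i(v_j)$.
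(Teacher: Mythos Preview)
The paper states this result as an unproved Observation, so there is no paper argument to compare against. More importantly, your base case contains a genuine gap: the claim that every $v_k \in N_i(v_r)$ ``lies in a common block of $G$ with $v_j$'' fails for two of the three types in your classification. If $v_k$ is the parent of $v_r$ in $T(G)$ then, since $v_r$ is a child of $v_j$ in the base case, $v_k = v_j$; thus $v_j \in N_i(v_r)$ whenever $j \ge i$, yet $v_j \notin N_i(v_j)$ because the neighborhood is open. If $v_k$ is a child of $v_r$ in $T(G)$ then $v_k$ is a grandchild of $v_j$, and in a block graph a grandchild need not be adjacent to its grandparent. Only the sibling case works as you describe.

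These are not merely technical oversights: the statement as printed is false for an arbitrary subscript $i$. On the path $v_1v_2v_3v_4$ with $\beta=(v_1,v_2,v_3,v_4)$, the vertex $v_1$ is a descendant of $v_4$, but $N_1(v_1)=\{v_2\}\not\subseteq\{v_3\}=N_1(v_4)$; already at the base-case level, $v_2$ is a child of $v_3$ yet $N_1(v_2)=\{v_1,v_3\}\not\subseteq\{v_2,v_4\}=N_1(v_3)$. The paper only invokes this ``fact'' inside the proof of Lemma~\ref{l:lem7}, always in a setting where $i$ is the current iteration index, where $v_j$ is already in the dominating set under construction (so containment in $N_i[v_j]$ is what is actually used), and where Lemma~\ref{l:lemma5} guarantees that the low-index neighbors of $v_r$ are already dominated. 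Before attempting a proof you should first pin down which additional hypothesis on $i$, $r$, $j$, or which closed-neighborhood variant, is really intended.
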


\noindent Using the properties of a tree and Observation~\ref{ob:block1} we have the following important observation.

\begin{obs}If a vertex $v_i$ is at level $l+2$, then $v_i$ does not have any neighbor at level $l$. Also if $v_i$ is semipaired with $v_j$, then $v_j$ may be one of the following in $T(G)$.\\ $(i)$ Parent of $v_i$ that is $F(v_i)$ or a grand parent of $v_i$ that is $F(F(v_i))$.\\
 $(ii)$ Child of $v_i$ or a grand child of $v_i$. \\
  $(iii)$ Sibling of $v_i$ or sibling of $F(v_i)$.\\
   $(iv)$ Child of some sibling $v_s$ of $v_i$.
\end{obs}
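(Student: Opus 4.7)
The plan is to handle the two assertions separately, both by leaning on Observation~\ref{ob:block1}, which forces every $G$-neighbour of $v_i$ to be either its parent, a child, or a sibling in $T(G)$. For the first sentence, I would note that, relative to $v_i$ at level $l+2$, its parent sits at level $l+1$, its children at level $l+3$, and its siblings at the same level $l+2$; since none of these values equals $l$, no neighbour of $v_i$ in $G$ can lie at level $l$.

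For the classification of a semipartner $v_j$, I would split on the distance $d_G(v_i,v_j)\in\{1,2\}$. The distance-$1$ case is immediate from Observation~\ref{ob:block1}: $v_j$ is the parent of $v_i$ (giving~(i)), a child of $v_i$ (giving~(ii)), or a sibling of $v_i$ (giving~(iii)). For the distance-$2$ case, I would pick a common neighbour $v_k \in N_G(v_i)\cap N_G(v_j)$ and apply Observation~\ref{ob:block1} twice, first to locate $v_k$ with respect to $v_i$ and then to locate $v_j$ with respect to $v_k$, producing nine configurations to inspect.

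These configurations fold into the listed outcomes as follows. If $v_k = F(v_i)$, then the parent of $v_k$ is $F(F(v_i))$, landing in~(i), while the children and siblings of $v_k$ distinct from $v_i$ are precisely the siblings of $v_i$ or of $F(v_i)$, landing in~(iii). If $v_k$ is a child of $v_i$, its parent is $v_i$ itself, its children are grandchildren of $v_i$, and its siblings are other children of $v_i$, all covered by~(ii). If $v_k$ is a sibling of $v_i$, its parent is $F(v_i)$ (in~(i)), its other siblings are $v_i$ or other siblings of $v_i$ (in~(iii)), and its children are exactly the vertices described in~(iv). The main obstacle is not depth but bookkeeping: I will have to verify that when $v_k$ is itself a sibling of $v_i$, the ``siblings of $v_k$'' in $T(G)$ really coincide with the other children of $F(v_i)$, which I would justify using Lemma~\ref{l:lemma3} together with the definition of sibling in the block tree. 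Once that identification is made, every vertex at $G$-distance at most~$2$ from $v_i$ falls into (i)--(iv), completing the observation.
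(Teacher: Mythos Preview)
Your proposal is correct and matches the paper's approach: the paper states only that the observation follows ``using the properties of a tree and Observation~\ref{ob:block1}'' without spelling out any cases, so you have simply supplied the case analysis that the paper leaves to the reader. Your nine-way split on the position of the common neighbour $v_k$ and the subsequent position of $v_j$ is exactly the natural unpacking of that one-line justification, and the bookkeeping concern you flag about siblings is handled by noting that siblings of $v_k$ share $v_k$'s parent in $T(G)$, hence coincide with siblings of $v_i$ when $F(v_k)=F(v_i)$.
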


Next, we present the detailed algorithm to compute a minimum cardinality semi-PD-set of a given block graph.

\begin{algorithm}[H]
\textbf{Input:} A block graph G=(V,E), corresponding block-tree $T(G)$, reverse of BFS ordering of vertices of $T(G)$: $\beta=(v_{1},v_{2},\ldots,v_{n})$\\
\textbf{Output:} A minimum cardinality semi-PD-set $D_{sp}$ of $G$.\\
\For{$i=1$ to $n$}{
	\If{$(D(v_i)=0$ and $i\neq n)$}{
		$D_{sp}=D_{sp}\cup\{F(v_i)\}$, let $F(v_i)=v_j$;\\
		$L(v_j)=1$;\\
		$D(u)=1 $ $\forall$ $u \in N_{G}[v_j]$;\\
		Let $C=\{u \in N_{G}[v_j] \mid m(u) \neq 0\}$;\\
		\If {$(C= \emptyset)$}{
			$m(F(v_j)) = j$;
			}
		\Else{
			Let $k=$min$\{b \mid v_b \in C\}$ and let $m(v_k)=r$;\\
			$L(v_j)= L(v_{r})=2$; $//$ semipair $v_{j}$ with $v_{r}$\\
			$m(v_k)=0$;
			}
		}
	\If{$(D(v_i) \neq 0$ and $m(v_i)=k \neq 0)$}{
		\If{$(L(F(v_i))=0)$}{
			$D_{sp}=D_{sp}\cup\{F(v_i)\}$;\\
			$D(u)=1 $ $\forall$ $u \in N_{G}[F(v_{i})]$;\\
			$L(v_k)= L(F(v_i))=2$ and $m(v_i)=0$;  $//$ semipair $v_{k}$ with $F(v_{i})$;
			}
		\ElseIf{$(L(v_i)=0)$}{
			$D_{sp}=D_{sp}\cup\{v_i\}$;\\
			$L(v_k)=L(v_i)=2$;  $//$ semipair $v_{k}$ with $v_{i}$\\
			$D(u)=1 $ $\forall$ $u \in N_{G}[v_{i}]$;\\
			$m(v_i)=0$;
			}
		\Else{
			$D_{sp}=D_{sp}\cup\{u\}$ where $u \in N(v_k)$ and $L(u) = 0$;\\
			$L(v_k)=L(u)=2$ and $m(v_i)=0$;  $//$ semipair $v_{k}$ with $u$;
			}
		}
	\If {$(D(v_n)=0)$}{
		$L(v_n)=2$;\\
		$D(v_n)=1$;\\
		$L(u)=2$ for some $u \in N_{G}(v_n)$ with $L(u)=0$;  $//$ semipair $v_{n}$ with $u$\\
        $D_{sp}=D_{sp}\cup \{v_{n},u\}$;\\
		}
}
return $D_{sp}$;

\caption{\textbf{Minimum Semipaired Domination in Block Graphs}}
\end{algorithm}

\subsection*{Illustration of the Algorithm with an example}
We illustrate the algorithm for computing a minimum cardinality semi-PD-set of the block graph shown in Fig.~\ref{fig:mix}. Since there are $20$ vertices in the graph, the algorithm will terminate in $20$-iterations. We process the vertices as they appear in the reverse of BFS-ordering. For the graph in the Fig.~\ref{fig:mix}, a reverse of BFS-ordering of the vertices is given by
\noindent $\beta = (u_{7}, u_{6}, u_{5}, u_{11}, u_{10}, u_{9}, u_{8}, u_{14},$ $u_{12}, u_{13}, u_{16}, u_{15},$ $u_{3}, u_{2}, u_{1}, u_{19}, u_{18}, u_{17}, u_{4}, u_{20})$.
The iterations of the algorithm are as follows:

\begin{center}
\begin{tabular}{ |c| }
\hline
\underline{\textsc{Initially}}\\$D(u_i) = L(u_i) = m(u_i)= 0$ for $i \in [20]$ and $D_{sp}= \emptyset $.\\
\hline
 \underline{\textsc{Iteration $1$}}\\
Since $D(u_7)=0$, we select $F(u_7)= u_{10}$.\\
$L(u_{10})=1$ and $D(u_{7})= D(u_{8})= D(u_{9})= D(u_{10})= D(u_{11})= D(u_{12})= 1$\\
As $C= \emptyset$, we have $m(F(u_{10}))=m(u_{12}) =10$\\
\textsc{After Iteration $1$}: $D_{sp}=\{u_{10}\}$ \\
\hline
 \underline{\textsc{Iteration $2$}}\\
Since $D(u_6)=0$, we select $F(u_6)= u_{8}$.\\
$L(u_{8})=1$ and $D(u_{5})= D(u_{6})= 1$\\
In this iteration note that $C = \{u_{12}\}$, $k=12$ and $m(u_{12})=10$.\\
$L(u_{10}) = L(u_{8}) = 2 $ and $m(u_{12}) = 0$\\
\textsc{After Iteration $2$}: $D_{sp}=\{u_{8}, u_{10}\}$ \\
\hline
We do not have any update in \underline{\textsc{Iteration $3, 4, 5, 6$, and $7$}}\\
\hline
 \underline{\textsc{Iteration $8$}}\\
Since $D(u_{14})=0$, we select $F(u_{14})= u_{15}$.\\
$L(u_{15})=1$ and $D(u_{14})= D(u_{15})= D(u_{16})= D(u_{17})= 1$\\
As $C= \emptyset$, hence $m(F(u_{15}))=m(u_{17}) = 15$\\
\textsc{After Iteration $8$}: $D_{sp}=\{u_{8}, u_{10}, u_{15}\}$\\
\hline
We do not have any update in \underline{\textsc{Iteration $9$}}\\
\hline
\underline{\textsc{Iteration $10$}}\\
Since $D(u_{13})=0$, we select $F(u_{13})= u_{19}$.\\
$L(u_{19})=1$ and $D(u_{13})= D(u_{17})= D(u_{18})= D(u_{19})= D(u_{20}) = 1$\\
As $C= \{u_{17}\}$, $k=17$ and $m(u_{17}) = 15$\\
$L(u_{15})=L(u_{19})=2$ and $m(u_{17})=0$\\
\textsc{After Iteration $10$}: $D_{sp}=\{u_{8}, u_{10}, u_{15}, u_{19}\}$\\
\hline
We do not have any update in \underline{\textsc{Iteration $11$ and $12$}}\\
\hline
\underline{\textsc{Iteration $13$}}\\
Since $D(u_{3})=0$, we select $F(u_{3})= u_{4}$.\\
$L(u_{4})=1$ and $D(u_{1})= D(u_{2})= D(u_{3})= D(u_{4})= 1$\\
As $C = \emptyset$, hence $m(F(u_{4})) = m(u_{20})=4$\\
\textsc{After Iteration $13$}: $D_{sp}=\{u_{8}, u_{10}, u_{15}, u_{19}, u_{4}\}$\\
\hline
We do not have any update in \underline{\textsc{Iteration $14$, $15$, $16$, $17$, $18$, and $19$}}\\
\hline
\underline{\textsc{Iteration $20$}}\\
In this iteration note that $m(u_{20}) = 4 \neq 0$.\\
As $L(F(u_{20}))= L(u_{20}) = 0$ hence, $L(u_{20}) = L(u_{4}) = 2 $ and $m(u_{20}) = 0$.\\
\textsc{After Iteration $20$}: $D_{sp}=\{u_{8}, u_{10}, u_{15}, u_{19}, u_{4}, u_{20}\}$\\
\bf{Return} $D_{sp}$\\
\hline
\end{tabular}
\\
\end{center}

\begin{lemma}
\label{l:lemma5}
For $1 \leq i \leq n$, when $v_i$ is the currently  considered vertex, the following holds:
\begin{enumerate}
\item $D(v_j)=1$ for $ j \in [i-1]$.
\item $m(v_j)=0$ for $ j \in [i-1]$.
\item if $v_i$ is at level $l$ from the root in $T(G)$, then for all vertices $u$ at level $l+2$ or more, either $L(u)=0$ or $2$.
\end{enumerate}
\end{lemma}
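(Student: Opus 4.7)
The plan is to prove the three invariants simultaneously by induction on $i$. The base case $i = 1$ is immediate: after initialization every $D$-, $L$-, $m$-value equals zero, so (1) and (2) hold vacuously and (3) holds because every $L$-value is $0$.

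For the inductive step I assume the invariants hold at the start of iteration $i$ and analyze which of the three conditional blocks fires. Invariant (1) is quickest: if $D(v_i)=0$ and $i<n$ the first block sets $D(u)=1$ for every $u\in N_G[F(v_i)]$, which contains $v_i$; if $D(v_n)=0$ and $i=n$ the third block sets $D(v_n)=1$; otherwise $D(v_i)$ was already $1$. For invariant (2), the induction hypothesis gives $m(v_j)=0$ for every $j<i$, so every element of the set $C=\{u\in N_G[v_j]\mid m(u)\neq 0\}$ computed in the first block has index $\geq i$; hence whenever $v_i\in C$ it is the minimum-index element and the assignment $m(v_k)=0$ clears $m(v_i)$. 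The second block explicitly clears $m(v_i)$ in each of its three sub-cases.

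For invariant (3), the key observation is that the only update writing $L = 1$ occurs in the first block at $F(v_i)$, the parent of $v_i$ in $T(G)$, which sits at level $l_i - 1$; every other $L$-update writes $2$. Moreover, $L$-values are monotonically non-decreasing, because any vertex already in $D_{sp}$ dominates its $T(G)$-children and so the first block cannot re-select it. Consequently no $L = 1$ is ever introduced at level $\geq l_i + 2$, which disposes of the case $l_{i+1} = l_i$. In reverse BFS order the only other possibility is $l_{i+1} = l_i - 1$; here I must additionally verify that every level-$(l_i + 1)$ vertex $v_c$ ends iteration $i$ with $L(v_c) \in \{0, 2\}$. Tracing the $m$-bookkeeping: $L(v_c) = 1$ can only have been introduced in some earlier iteration where the first block fired on a child of $v_c$ with $C = \emptyset$, and this simultaneously set $m(v_p) = c$, where $v_p$ is the parent of $v_c$ at level $l_i$. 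This pending request is discharged either (a) by a later iteration selecting a second child of $v_p$, in which case $v_p \in C$ forces the $C \neq \emptyset$ branch to semipair $v_c$ and set $L(v_c) = 2$, or (b) no further child of $v_p$ is selected, so $m(v_p) = c$ survives until iteration $i_p$ at which $v_p$ is processed, where $D(v_p) = 1$ (since $v_c \in D_{sp}$ dominates $v_p$) and $m(v_p) \neq 0$ trigger the second block and again set $L(v_c) = 2$. In either case $L(v_c) = 2$ by the end of iteration $i_p \leq i$.

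The main obstacle is the $m$-bookkeeping for invariant (3) at a level transition: one has to verify that at any moment at most one child of a given $v_p$ carries a pending $L = 1$, so that the two alternatives above are genuinely exhaustive. Once that is in place, the monotonicity of $L$ together with the layer-by-layer structure of reverse BFS closes the argument.
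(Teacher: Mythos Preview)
Your handling of invariants (1) and (2) is essentially the paper's argument, just unpacked in slightly more detail; that part is fine.

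For invariant (3), however, you take a noticeably harder route than the paper and the argument has a genuine gap. In your case (a) you claim that when a later first-block iteration selects a second child $v_{j'}$ of $v_p$, the fact that $v_p\in C$ ``forces the $C\neq\emptyset$ branch to semipair $v_c$.'' But the algorithm semipairs $v_{m(v_k)}$ only for the \emph{minimum-index} element $v_k$ of $C$, and $v_p$ need not be that minimum. Indeed, $N_G[v_{j'}]$ contains $v_{j'}$ itself and possibly siblings of $v_{j'}$ at level $l_i+1$, all of which have smaller index than $v_p$; any of these could carry a nonzero $m$-value (coming from a previously selected child in a different block) and hence sit in $C$ below $v_p$. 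In that situation the else-branch clears some other $m$-entry and leaves both $m(v_p)=c$ and $L(v_c)=1$ intact. Your proposed fix---checking that at most one child of $v_p$ has $L=1$---addresses a different concern (overwriting of $m(v_p)$) and does not close this gap.

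The paper's proof of (3) avoids all of this by deriving it directly from (2). The key invariant, implicit in the algorithm, is that $L(v_t)=1$ if and only if $m(F(v_t))=t$: the only line that writes $L=1$ simultaneously writes $m(F(v_t))=t$ (the $C=\emptyset$ branch), and every line that clears $m(v_k)$ to $0$ simultaneously sets $L(v_{m(v_k)})=2$; moreover $m(F(v_t))$ cannot be overwritten by another nonzero value while it is nonzero, since $F(v_t)\in N_G[v_{j'}]$ would force $C\neq\emptyset$. Granting this, suppose $v_t$ at level $\ge l+2$ has $L(v_t)=1$ when $v_i$ (at level $l$) is current. Then $m(F(v_t))=t\neq 0$, and $F(v_t)$ sits at level $\ge l+1$, hence has index $<i$ in reverse BFS. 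This contradicts invariant (2). That is the entire argument---no level-transition case split and no tracing of which branch discharges which pending request.
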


\begin{proof}
Note that in the $j^{th}$-iteration of the algorithm we dominate $v_j$ if it was not dominated before $j^{th}$-iteration, that is, if $D(v_j)=0$  then we include $F(v_j)$ in $D_{sp}$ to dominate $v_j$ and we also update $D(v_j)=1$. Also, in the $j^{th}$-iteration if $m(v_j)=r \neq 0$ then $v_r$ is already selected in $D_{sp}$ and we need to semipair $v_r$ with a vertex in $N_{T(G)}[v_j]$ or with some sibling of $v_j$. In the $j^{th}$-iteration we semipair $v_r$ with a vertex in $N_{T(G)}[v_j]$ or with some sibling of $v_j$ and update $m(v_j)=0$. Hence, the first two statements of the lemma follows.

Now, suppose if $v_i$ is at level $l$ from the root in $T(G)$ and there is a vertex $v_t$ at level $l+2$ or more such that $L(v_t)=1$. Hence $v_t$ is included in $D_{sp}$ but it is not semipaired yet. In this case, there must be a vertex $v_{t'}$ at level $l+1$ or more such that $m(v_{t'})=t$ and $t'<i$, which is a contradiction to the second statement of the lemma. Hence the result follows.
\end{proof}

Note that if for any vertex $v_r$, $m(v_r) = k \neq 0$ then $v_k$ is selected in the semi-PD-set  but still need to be semipaired with a vertex in $N_{G}[v_r] \setminus \{v_k\}$. Now for any vertex $v_t \in N_{G}[v_r] \setminus \{v_k\}$, if we update $L(v_t)=1$, that is if we select $v_t$ in the semi-PD-set, according to algorithm, we semipair $v_t$ with $v_k$ and make $m(v_r) = 0$. Hence, in any iteration if for any vertex $v_r$, $m(v_r) = k \neq 0$ we have the following observation.

\begin{obs}
\label{ob:block3}
In the $i^{th}$-iteration, if for any vertex $v_r$, $m(v_r) = k \neq 0$, then for any $u \in N_G[v_r] \setminus \{v_k\}$, $L(u)=0$ or $2$.
\end{obs}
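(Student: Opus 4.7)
I plan to establish the statement as a loop invariant by induction on the number of elementary updates performed by the algorithm, paired with the bookkeeping companion
\[
(\dagger)\qquad L(v_k)=1\ \Longleftrightarrow\ m(F(v_k))=k,
\]
which matches each currently selected-but-unpaired vertex with a unique pending marker placed at its parent in $T(G)$. Both statements hold vacuously at initialization.

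The only line that creates a new nonzero $m$-value or a new $L=1$ label is the pair of assignments $L(v_j)=1$ and $m(F(v_j))=j$ in the $C=\emptyset$ branch of the first if-block; every other branch either promotes an $L$-value to $2$ or clears an $m$-value to $0$, which can only destroy, not create, a potential violation. The inductive step therefore reduces to checking two things inside that branch: (i) the freshly created label $L(v_j)=1$ does not violate the invariant at any previously active marker $m(v_{r'})=k'$; and (ii) the freshly created marker $m(v_r)=j$, with $v_r=F(v_j)$, is not violated by any previously set label $L(u)=1$ for some $u\in N_G[v_r]\setminus\{v_j\}$.

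Part (i) is immediate: if $v_j\in N_G[v_{r'}]\setminus\{v_{k'}\}$ then $v_{r'}\in N_G[v_j]$ with $m(v_{r'})\neq 0$, so $v_{r'}\in C$, contradicting $C=\emptyset$. For part (ii), Observation~\ref{ob:block1} places the hypothetical $u$ as either $v_r$, a child, a parent, or a sibling of $v_r$ in $T(G)$. The child subcase fails immediately, since $F(u)=v_r$ would force $m(v_r)=\mathrm{index}(u)\neq 0$ by $(\dagger)$, contradicting $v_r\in N_G[v_j]$ and $C=\emptyset$. The remaining three subcases are ruled out by the reverse-BFS ordering (Lemma~\ref{l:lem4}): letting $\ell$ denote the level of $v_r$ in $T(G)$, the current iteration processes a vertex $v_i$ at level $\ell+2$ (a child of $v_j$), whereas making $L(u)=1$ earlier would require some previous iteration to have processed a child of $u$, which lies at level at most $\ell+1$ in each of the three remaining positions of $u$. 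Since level $\ell+2$ is processed strictly before level $\ell+1$ and all shallower levels in reverse BFS, no such earlier iteration has occurred, and so $L(u)=0$ at this moment, contradicting $L(u)=1$.

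The main obstacle is this level/timing case analysis in part (ii). Once it is in place, the non-critical branches preserve the invariant trivially and the induction closes; the essential mechanics of the proof are that the $C=\emptyset$ condition rules out spatial conflicts (an existing marker inside $N_G[v_j]$) while the reverse-BFS ordering rules out temporal conflicts (an earlier $L=1$ label in $N_G[v_r]$).
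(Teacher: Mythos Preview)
Your argument is correct, and it is considerably more thorough than what the paper offers. The paper's justification is the short paragraph immediately preceding the observation: it argues only that if a marker $m(v_r)=k$ is already present and the algorithm then assigns $L(v_t)=1$ to some $v_t\in N_G[v_r]\setminus\{v_k\}$, the test $C\neq\emptyset$ forces $L(v_t)$ to be promoted to~$2$ in the same iteration (and some marker to be cleared). That is essentially your part~(i). The paper does not treat the reverse order---an $L=1$ label that predates the creation of a new marker at $v_r=F(v_j)$---which is exactly your part~(ii). Your companion invariant $(\dagger)$ together with the reverse-BFS level argument (the child of any such $u$ would sit at level $\le \ell+1$, whereas only vertices at level $\ge\ell+2$ have been processed by iteration~$i$) closes that direction cleanly; the child subcase collapsing back to $C=\emptyset$ via $(\dagger)$ is the neat unifying step. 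One minor caveat worth recording in a final write-up: when $v_j=v_n$ the parent coincides with $v_j$ so $v_i$ sits at level $\ell+1$ rather than $\ell+2$, but then the three ``remaining'' subcases are vacuous (the root has no parent or sibling, and $u=v_r=v_j$ is excluded), so the argument still goes through.
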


For each $0 \leq i \leq n$, let $D_i=\{v \mid L(v) >0\}$ when $v_i$ has just been considered. In particular $D_0= \emptyset $ and $D_n = \{v \mid L(v) >0\}$, when all the vertices of the graph have been processed. Clearly, $D_n$ is a dominating set. Note that when we are processing the vertex $v_n$ in our algorithm, by Lemma~\ref{l:lemma5} for all vertices $u$ at level $2$ or more, either $L(u)=0$ or $2$ and $D(v_j)=1$ for $j \in [n-1]$. Also note that in the $n^{th}$-iteration, if for a vertex $u$ at level $1$, we have $L(u)=1$, then $m(v_n) \neq 0$ and hence by Observation~\ref{ob:block3} for all vertices $w$ other than $u$ at level $1$, we have $L(w)=0$ or $2$. In this iteration, if $L(v_n) = 0$, then we semipair $u$ with $v_n$; otherwise, we semipair $u$ with one of its neighbors. Also, if $D(v_n)=0$, this implies no vertex from level $1$ is selected in the dominating set. So, according to Algorithm~$1$, we select $v_n$ in the semi-PD-set and pair it with one of it's children. Therefore, after the $n^{th}$-iteration for all $u \in V(G)$, we have $L(u)=0$ or $2$ and $D_n$ is a semi-PD-set. So in order to prove the correctness of the algorithm, we need to show that $D_{n}$ is contained in a minimum semi-PD-set $D^*_{sp}$.

\begin{lemma}
\label{l:lem7}
For each $0 \leq i \leq n$, there is minimum semi-PD-set $D_{sp}'$ such that:
\begin{enumerate}
\item $D_i \subseteq D_{sp}'$ and if $u$ and $v$ are semipaired in $D_i$ then $u$ and $v$ are also semipaired in $D_{sp}'$.
\item if in the $i^{th}$-iteration we are updating $m(v_j)=k$ for some vertex $v_j$, then $v_k$ is either semipaired with a vertex in $N_{T(G)}[v_j]$ or with a sibling of $v_j$ in $D_{sp}'$.
\end{enumerate}
\end{lemma}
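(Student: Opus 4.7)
The plan is to proceed by induction on $i$. The base case $i=0$ is immediate: $D_0 = \emptyset$, no $m$-value is pending, and we take $D_{sp}'$ to be any minimum semi-PD-set of $G$. For the inductive step, assuming a minimum semi-PD-set $D^{(i-1)}$ witnesses the two properties for $i-1$, we build $D^{(i)}$ by a local exchange on $D^{(i-1)}$. The cases follow the branches of Algorithm~1 executed in iteration~$i$.

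When $D(v_i) = 1$, $m(v_i) = 0$, and $i \neq n$, nothing changes in the iteration and $D^{(i)} = D^{(i-1)}$ suffices. The principal case is $D(v_i) = 0$ with $i \neq n$: the algorithm adds $v_p = F(v_i)$ to $D_{sp}$. Since $D^{(i-1)}$ is a semi-PD-set, $v_i$ is dominated by some $w \in D^{(i-1)} \cap N_G[v_i]$. If $v_p \in D^{(i-1)}$ we already have $D_i \subseteq D^{(i-1)}$ up to pairing; otherwise we replace $w$ by $v_p$. Domination is preserved because Lemma~\ref{l:lem5} gives $N_i[v_i] \subseteq N_i[v_p]$ and, via the observation on descendants just after it, $N_i[w] \subseteq N_i[v_p]$ for every $w \in N_G[v_i]$ with index~$\geq i$, while vertices of index $< i$ are already covered by $D_{i-1} \subseteq D^{(i-1)}$ using Lemma~\ref{l:lemma5}. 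Cardinality is preserved by the one-for-one swap.

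Within the principal case we must also handle semipairing. When the algorithm finds $C \neq \emptyset$ and semipairs $v_p$ with $v_r$ where $r = m(v_k)$, the inductive hypothesis applied to the iteration that originally set $m(v_k) = r$ guarantees that $v_r$ is already semipaired in $D^{(i-1)}$ with a vertex in $N_{T(G)}[v_k]$ or a sibling of $v_k$; since $v_p \in N_G[v_k]$, one further local exchange makes $v_p$ itself the partner of $v_r$. When $C = \emptyset$, the algorithm defers by setting $m(F(v_p)) = p$, and for the second property we must exhibit a minimum semi-PD-set in which $v_p$'s partner lies in $N_{T(G)}[F(v_p)]$ or among siblings of $F(v_p)$. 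Since any partner must lie at graph-distance at most~$2$ from $v_p$, Observation~\ref{ob:block1} and the BFS-tree structure restrict the candidates to exactly the allowed region, and an exchange along the siblings absorbs any remaining freedom. The remaining cases, $D(v_i) \neq 0$ with $m(v_i) = k \neq 0$ and the terminal case $i = n$ with $D(v_n) = 0$, proceed by analogous exchanges using Observation~\ref{ob:block1}, Observation~\ref{ob:block3}, and Lemma~\ref{l:lem5}.

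The main obstacle is the simultaneous management of domination and semipairing during each exchange. When $w$ is swapped for $v_p$, the partner $x$ of $w$ in $D^{(i-1)}$ may lie at graph-distance~$2$ from $w$ but not from $v_p$; we must either show $x$ is at distance at most~$2$ from $v_p$, re-pair $x$ with some other vertex already in $D^{(i-1)}$, or perform an additional swap. The block-tree structure and Lemma~\ref{l:lem5} give enough flexibility for this repair, but the bookkeeping splits into subcases according to $w$'s position relative to $v_i$ (child, sibling, parent, or $v_i$ itself) and to whether $x$ is a descendant, ancestor, or sibling of $v_p$. Carrying out this case analysis in a way that simultaneously preserves the pending $m$-commitments is the delicate part of the proof.
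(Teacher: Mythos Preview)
Your proposal is correct and follows essentially the same approach as the paper: induction on $i$, with the inductive step mirroring the branches of Algorithm~1 and repairing the witness set $D^{(i-1)}$ via local exchanges that use Lemma~\ref{l:lem5}, Lemma~\ref{l:lemma5}, and Observation~\ref{ob:block3}. The paper's proof carries out precisely the case analysis you outline and explicitly flag as ``the delicate part'': Case~1 ($D(v_i)=0$) splits into Claim~\ref{c:cl1} (forcing $v_j=F(v_i)$ into the optimum), then Case~1.1 ($C=\emptyset$) and Case~1.2 ($C\neq\emptyset$, with further subcases~1.2.1 and~1.2.2 according to whether the current partner $v_s$ of $v_j$ lies in $D_{i-1}$); Case~2 handles $m(v_i)\neq 0$ with three subcases on $L(F(v_i))$ and $L(v_i)$; Case~3 handles $D(v_n)=0$. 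One small point to tighten: in your $C=\emptyset$ discussion, the partner of $v_p$ need \emph{not} already lie in $N_{T(G)}[F(v_p)]$ or among the siblings of $F(v_p)$ (it may be a grandchild of $v_p$ or a child of a sibling of $v_p$); the paper's Case~1.1 shows how an explicit swap with $F(v_j)$ or with the relevant sibling $v_s$ moves it there, which is what your phrase ``an exchange along the siblings'' is gesturing at.
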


\begin{proof}
We will prove the result using induction on $i$. Clearly, when $i=0$, there is semi-PD-set  $D_{sp}'$ such that $D_0 \subseteq D_{sp}'$. Now, suppose that the result holds for any integer less than $i<n$, that is, there is a minimum semi-PD-set  $D_{sp}^*$ such that $D_{i-1} \subseteq D_{sp}^*$ and the second condition of the lemma is satisfied.\\

\noindent Now in the $i^{th}$-iteration we will have the following cases:\\

\noindent\textbf{Case 1.} $D(v_i) = 0$ and $i \neq n$.

In this case, $D_i=D_{i-1} \cup \{F(v_i)\}$. Let $F(v_i)=v_j$ and assume that $v_i$ is at level $l \geq 1$ from the root $v_n$. We proceed further with the following claim.

\begin{claim}
\label{c:cl1}
 Let $v_j \notin D_{sp}^*$, $v_p \in D_{sp}^*$ be a vertex dominating $v_i$, and $v_q$ be the vertex which is semipaired with $v_p$ in $D_{sp}^*$, then either $(D^*_{sp} \setminus \{v_q\}) \cup \{v_j\}$ or $(D^*_{sp} \setminus \{v_p\}) \cup \{v_j\}$ is a minimum semi-PD-set satisfying the first statement of the lemma.
\end{claim}

\begin{proof}
Suppose that $v_j \notin D_{sp}^*$. Let $v_p$ be a vertex in $D_{sp}^*$ dominating $v_i$, and let $v_q$ be the vertex that is semipaired with $v_p$. Clearly, $j \notin \{p,q\}$ and $v_p \notin D_{i-1}$. If $v_q \in D_{i-1}$ and $v_q$ is at level $l+2$ or more, then $v_p \in D_{i-1}$ by Lemma~\ref{l:lemma5}, a contradiction. Hence, if $v_q \in D_{i-1}$, then $v_q$ is at level $l+1$ or less from the root in $T(G)$. Further if $v_q \notin D_{i-1}$ and $v_q$ is at level $l+2$ or more in $T(G)$, then by Lemma~\ref{l:lemma5}, the set $D_{sp}'= (D^*_{sp} \setminus \{v_q\}) \cup \{v_j\}$ where $v_j$ is semipaired with $v_p$ is the required minimum semi-PD-set. Hence, we may assume that $v_q$ is at level $l+1$ or less.

Suppose firstly that the vertex $v_p$ is a child of $v_i$ in $T(G)$. Since $v_q$ is at level $l+1$ or less  and $q \neq j$, we have $v_q \in N_{G}[v_i] $. Hence, $d(v_j,v_q) \leq 2$. Note that $N_{i}(v_p)\subseteq N_{i}(v_j)$. Since $d(v_q,v_j)\leq 2$, the set $D_{sp}' = (D^*_{sp} \setminus \{v_p\}) \cup \{v_j\}$ where $v_j$ is semipaired with $v_q$ is the required minimum semi-PD-set.

Suppose secondly that $v_p$ is a sibling of $v_i$ or $v_p = v_i$ in $T(G)$. In this case, $v_p$ is a child of $v_j$ in $T(G)$. Since $v_q$ is at level $l+1$ or less hence, either $v_q$ is a child of $v_p$ or a child of some sibling say $v_s$ of $v_p$ or $v_q$ is a sibling of $v_p$ or a sibling of $v_j$ or $v_q=F(v_j)$.

If $v_q$ is a sibling of $v_p$ or a sibling of $v_j$ or $v_q=F(v_j)$, we have $v_qv_j \in E(G)$. If $v_q$ is a child of $v_p$, then $d(v_j,v_q) =2$. If $v_q$ is a child of some sibling, say $v_s$,  of $v_p$, then $v_s$ is a child of $v_j$. Thus since $v_q$ is a child of $v_s$ in $T(G)$, we again have $d(v_j,v_q) =2$. In all of the above cases, we have $d(v_j,v_q) \leq 2$. Further, since $v_p$ is a child of $v_j$, using Lemma~\ref{l:lemma5} and the fact that $N_{i}[v_p]\subseteq N_{i}[v_j]$, we have that $D_{sp}'= (D^*_{sp} \setminus \{v_p\}) \cup \{v_j\}$ where $v_j$ is semipaired with $v_q$ is the required minimum semi-PD-set.
\end{proof}

By Claim~\ref{c:cl1}, we may assume that $v_j \in D_{sp}^*$, for otherwise the desired result holds. We now let $C=\{u \in N_{G}[v_j] \mid m(u) \neq 0\}$,
and consider two subcases.\\

\noindent\textbf{Case 1.1.} $C = \emptyset$.

Here we are updating $m(F(v_j))=j$, we need to show that $v_j$ is either semipaired with a vertex in $N_{T(G)}[F(v_j)]$ or with some sibling of $F(v_j)$ in $D_{sp}^*$. Let $v_r$ be the vertex semipaired with  $v_j$ in $D_{sp}^*$ and suppose neither $v_r \notin N_{T(G)}[F(v_j)]$ nor $v_r$ is a sibling of $F(v_j)$ in $T(G)$. We note that in this case $v_r$ is at level $l$ or $l+1$ in $T(G)$. Further, since $C = \emptyset$, $m(u) = 0$ for all $u \in N_{G}[v_j]$. This implies that for all $u \in N[N[v_j]]$ (that is, for all $u$ such that $d(v_j,u) \leq 2$), $L(u) = 0$ or $2$. Thus, $v_r \in D_{i-1}$ implies $v_j \in D_{i-1}$ a contradiction. Hence, $v_r \notin D_{i-1}$. If $v_r$ is at level $l+1$ or a child of $v_j$, then using Lemma~\ref{l:lemma5} and the fact that $N_i(v_r) \subseteq N_i(v_j)$ we may conclude that $D_{sp}' = (D_{sp}^* \setminus \{v_r\}) \cup \{F(v_j)\}$ with $v_j$ semipaired with $F(v_j)$ in $D_{sp}'$ is the required minimum semi-PD-set. If $v_r$ is at level $l$  and not a child of $v_j$ in $T(G)$, then $v_r$ is a child of some sibling $v_s$ of $v_j$ in $T(G)$ such that $v_sv_j \in E(G)$. In this case $N_i[v_r] \subseteq N_i[v_s]$. Now if $v_s \in D_{sp}^*$ then using Lemma~\ref{l:lemma5}, $D_{sp}' = (D_{sp}^* \setminus \{v_r\}) \cup \{F(v_j)\}$ where $v_j$ is semipaired with $F(v_j)$ in $D_{sp}'$ is the required minimum semi-PD-set. If $v_s \notin D_{sp}^*$, then using Lemma~\ref{l:lemma5}, $D_{sp}' = (D_{sp}^* \setminus \{v_r\}) \cup \{v_s\}$ where $v_j$ is semipaired with $v_s$ in $D_{sp}'$ is the required minimum semi-PD-set.\\

\noindent\textbf{Case 1.2.} $C \ne \emptyset$.

Let $k = \min \{b \mid v_b \in C\}$ and let $m(v_k)=r$. Since in this case we semipair vertices $v_r$ and $v_j$ in $D_i$, we need to show that they are semipaired in $D_{sp}'$ as well. By Lemma~\ref{l:lemma5}, $k \geq i$. If $v_r$ is semipaired with $v_j$ in $D^*_{sp}$ then the result follows; otherwise, let $v_s$ and $v_t$ be the vertices semipaired with $v_j$ and $v_r$, respectively, in $D_{sp}^*$.
Since $m(v_k) = r$ after $(i-1)^{th}$-iteration, $v_r$ still needs to be semipaired with a vertex in $N_{T(G)}[v_k]$ or with some sibling of $v_k$. Hence, $v_r$ is not semipaired until the $(i-1)^{th}$-iteration in $D_{i-1}$. Also as $m(v_k) = r$, using the second part of Lemma~\ref{l:lem7}, either $v_t \in N_{T(G)}[v_k]$ or $v_t$ is a sibling of $v_k$ such that $v_tv_k \in E(G)$, that is, $v_t \in N_G[v_k]$. Note that if $v_t \in D_{i-1}$ then $L(v_t) =1$ but as $m(v_k) = r$ by Observation~\ref{ob:block3}, for all $u \in N_G[v_k] \setminus \{v_r\}$, $L(u)=0$ or $2$. Hence, $v_t \notin D_{i-1}$.

Suppose that $v_k$ is a child of $v_j$ in $T(G)$. In this case, $v_t$ is a child of either $v_j$ or $v_k$ in $T(G)$. If $v_t$ is a child of $v_k$ then $N_i(v_t) \subseteq N_i[v_j]$ and if $v_t$ is a child of $v_j$ then $N_i[v_t] \subseteq N_i[v_j]$. Now, if $N_{G}(v_s) \subseteq D^*_{sp}$, then by Lemma~\ref{l:lemma5} $D'_{sp} = D^*_{sp} \setminus \{v_s, v_t\}$ is semi-PD-set of smaller size, a contradiction. Hence there exist a vertex $u \in N_{G}(v_s)$ such that $u \notin D^*_{sp}$ and $D'_{sp} = (D^*_{sp} \setminus \{v_t\}) \cup \{u\}$ with $v_j$ and $v_s$ semipaired with $v_r$ and $u$, respectively, is the required minimum semi-PD-set. Hence, we may assume that $v_k$ is not a child of $v_j$ in $T(G)$, for otherwise the desired result follows. Also, since $v_k$ is the minimum index vertex such that $v_k \in C$, in further cases we may assume that there is no child $v_{k'}$ of $v_j$ with $m(v_{k'}) \neq 0$. So, $v_k$ can be one of the following: $(a)$ a sibling of $v_j$, $(b)$ $v_j$, or $(c)$ $F(v_j)$. We will consider these remaining cases under inclusion or exclusion of $v_s$ in $D_{i-1}$.\\

\noindent\textbf{Case 1.2.1}  $v_s \in D_{i-1}$.

If $L(v_s) = 2$, then $v_j \in D_{i-1}$, a contradiction. Hence, $L(v_s)=1$. Thus, if $v_s \in D_{i-1}$, then there exists a vertex $v_{s_1}$ such that $m(v_{s_1})=s$, where $v_{s_1} = F(v_s)$ and $v_{s_1} \in N_{G}[v_j]$. If $v_s$ is at level $l+1$ in $T(G)$, then $v_{s_1}$ is a child of $v_j$, a contradiction noting that $k = \min \{s \mid v_s \in C\}$. If $v_s$ is at level $l$ in $T(G)$, then  $v_{s_1}$ is a sibling of $v_j$ in $T(G)$ and $s_1 > k$, implying that $v_k \neq F(v_j)$. If $v_s$ is at level $l-1$, then $v_{s_1}=F(v_j)$ hence, $v_k \neq F(v_j)$. Therefore, we may note that if $v_s \in D_{i-1}$ then $v_k \neq F(v_j)$. So $v_k$ may be a sibling of $v_j$ or $v_k=v_j$.

Let $v_k$ is a sibling of $v_j$. Now, $v_kv_j \in E(G)$ and either $s_1=j$ or $v_{s_1}v_j \in E(G)$. If $v_t$ is a child of $v_k$, then using the fact that $N_i[v_t] \subseteq N_i[v_k]$, the set $D_{sp}' = (D_{sp}^* \setminus \{v_t\}) \cup \{v_k\}$ with $v_j$ semipaired with $v_r$ and with $v_s$ semipaired with $v_k$ is the required minimum semi-PD-set. If $v_t$ is not a child of $v_k$, then $v_t$ is either a sibling of $v_k$ such that $v_kv_t \in E(G)$ or $v_t = F(v_j)$.  We note that if either $v_t$ is a sibling of $v_k$ such that $v_kv_t \in E(G)$ or $v_t = F(v_j)$ then the vertices in the set $\{v_j,v_k,v_{s_1},v_t\}$ are in the same block of $G$ and $v_sv_{s_1} \in E(G)$. Hence, $d(v_s,v_t) \leq 2$.
Therefore, if we exchange the semipairs, that is, if we semipair $v_j$ and $v_r$, and semipair $v_s$ and $v_t$ in the $D_{sp}^*$, then $D_{sp}^*$ is the desired minimum semi-PD-set.

Let $v_j=v_k$. If $v_t$ is a child of $v_k$ and $N_{G}(v_s) \subseteq D_{sp}^*$ then using Lemma~\ref{l:lemma5} and the fact that $N_i[v_t] \subseteq N_i[v_k]$, the set $D'_{sp} = D^*_{sp} \setminus \{v_s, v_t\}$ is a semi-PD-set of smaller size, a contradiction. Therefore, if $v_t$ is a child of $v_k$, then there exists a vertex $u \in N_{G}(v_s)$ such that $u \notin D^*_{sp}$. In this case, using Lemma~\ref{l:lemma5} and the fact that $N_i[v_t] \subseteq N_i[v_k]$, the set $D_{sp}' = (D_{sp}^* \setminus \{v_t\}) \cup \{u\}$ with $v_j$ semipaired with $v_r$ and with $v_s$ semipaired with $u$ is the required minimum semi-PD-set. If $v_t$ is not a child of $v_k$ in $T(G)$, then observe that $d(v_s,v_t) \leq 2$. Hence, if we exchange the semipairs, that is, we semipair $v_j$ and $v_r$, and semipair $v_s$ and $v_t$ in the $D_{sp}^*$, then $D_{sp}^*$ is the desired minimum semi-PD-set.

\noindent\textbf{Case 1.2.2.}  $v_s \notin D_{i-1}$.

If $s \leq i$ or $v_s$ is a child of $v_j$ in $T(G)$ and $N_{G}(v_t) \subseteq D^*_{sp}$, then  $D'_{sp} = D^*_{sp} \setminus \{v_s, v_t\}$ is a semi-PD-set of smaller size, a contradiction noting that $N_i(v_s) \subseteq N_i(v_j)$. Hence, if $s \leq i$ or $v_s$ is a child of $v_j$ in $T(G)$, then there exist a vertex $u \in N_{G}(v_t)$ such that $u \notin D^*_{sp}$ and in this case $D_{sp}' = (D^*_{sp} \setminus \{v_s\}) \cup \{u\}$ with $v_j$ and $v_t$ semipaired with $v_r$ and $u$, respectively, is the required minimum semi-PD-set. Now we have, $s > i$ and $v_s$ is not a child of $v_j$, implying that in $T(G)$, the vertex $v_s$ may be one of the following: (i) a child of some sibling $v_s'$ of $v_j$ in $T(G)$, (ii) $v_s$ is a sibling of $v_j$, (iii) $v_s=F(v_j)$, (iv) $v_s$ is a sibling of $F(v_j)$ or (v) $v_s = F(F(v_j))$.

Suppose firstly that $v_k$ is a sibling of $v_j$ in $T(G)$. If $v_t$ is a child of $v_k$, then using the fact that $N_i[v_t] \subseteq N_i[v_k]$, the set $D_{sp}' = (D_{sp}^* \setminus \{v_t\}) \cup \{v_k\}$ with $v_j$ semipaired with $v_r$, and $v_s$ semipaired with $v_k$ is the required minimum semi-PD-set. If $v_t$ is not a child of $v_k$, then $v_t$ is either a sibling of $v_k$ or $v_t = F(v_j)$. If $v_t$ is a sibling of $v_k$ or $v_t = F(v_j)$ then vertices $v_j$, $v_k$, and $v_t$ belongs to the same block of $G$ and in both cases, we observe that $d(v_s,v_t) \leq 2$. Hence, if we exchange the semipairs, that is, if we semipair $v_j$ and $v_r$, and semipair $v_s$ and $v_t$ in the $D_{sp}^*$, then $D_{sp}^*$ is the desired minimum semi-PD-set.

Suppose secondly that $v_j=v_k$. If $v_t$ is a child of $v_k$ and $N[v_s] \subseteq D_{sp}^*$, then using the fact that $N_i[v_t] \subseteq N_i[v_k]$, the set $D'_{sp} = D^*_{sp} \setminus \{v_s, v_t\}$ is semi-PD-set of smaller size, a contradiction. Therefore, if $v_t$ is a child of $v_k$, then there exists a vertex $u \in N(v_s)$ such that $u \notin D^*_{sp}$. In this case, using Lemma~\ref{l:lemma5} and the fact that $N_i[v_t] \subseteq N_i[v_k]$, the set $D_{sp}' = (D_{sp}^* \setminus \{v_t\}) \cup \{u\}$ with $v_j$ semipaired with $v_r$, and with $v_s$ semipaired with $u$, is the required minimum semi-PD-set. If $v_t$ is not a child of $v_k$ in $T(G)$, then observe that $d(v_s,v_t) \leq 2$. Hence, if we exchange the semipairs, that is, if we semipair $v_j$ and $v_r$, and semipair $v_s$ and $v_t$ in the $D_{sp}^*$, then $D_{sp}^*$ is the desired minimum semi-PD-set.

Suppose next that $v_k=F(v_j)$. If $v_s$ is a child of some sibling $v_{s_1}$ of $v_j$ in $T(G)$ then $v_{s_1}$ is a child of $F(v_j)$. We note that as $v_k=F(v_j)$, this implies $v_t \in N_{T(G)}(F(v_j))$ or $v_t$ is a sibling of $F(v_j)$ and $v_tF(v_j) \in E(G)$. Hence we may observe that $d(v_{s_1},v_t) \leq 2$. Therefore, if $v_s$ is a child of some sibling $v_{s_1}$ of $v_j$ in $T(G)$, then using Lemma~\ref{l:lemma5} and the fact that $N_i[v_s] \subseteq N_i[v_{s_1}]$, the set $D_{sp}' = (D_{sp}^* \setminus \{v_s\}) \cup \{v_{s_1}\}$ with $v_j$ semipaired with $v_r$, and with $v_t$  semipaired with $v_{s_1}$, is the required minimum semi-PD-set. If $v_s$ is not a child of some sibling $v_{s_1}$ of $v_j$ in $T(G)$, then we observe that $d(v_s,v_t) \leq 2$. Hence, if we exchange the semipairs, that is, if we semipair $v_j$ and $v_r$, and semipair $v_s$ and $v_t$ in the $D_{sp}^*$, then $D_{sp}^*$ is the desired minimum semi-PD-set.\\

\noindent\textbf{Case 2.} $D(v_i) \neq 0$ and $m(v_i)=k \neq 0$.

In this case, we semipair $v_k$, either with some vertex in $N_{T(G)}[v_i]$ or with a sibling of $v_i$. Suppose $w$ is the vertex that is semipaired with $v_k$ in $D_{i-1}$. So we need to show that $v_k$ is semipaired with $w$ in $D_{sp}'$ as well. Using induction, there is minimum semi-PD-set $D_{sp}^*$, such that $D_{i-1} \subseteq D_{sp}^*$ and $v_k$ is semipaired with a vertex either in $N_{T(G)}[v_i]$ or a sibling of $v_i$. Using Observation~\ref{ob:block3}, we note that for all vertices $u \in N_G[v_i] \setminus \{v_k\}$ either $L(u)=0$ or $2$ and $D(v_j)=1$ for $j \in [i-1]$ by  Lemma~\ref{l:lemma5}.  Let $v_k$ be semipaired with $v_p$. We note that $v_p \notin D_{i-1}$.\\

\noindent\textbf{Case 2.1.} $L(F(v_i))=0$.

If $v_p=F(v_i)$, then the result follows. Let $v_p \neq F(v_i)$, implying that $v_p$ is a child of $v_i$ or a sibling of $v_i$ or $p=i$. If $v_p$ is child of $v_i$ then $N_{i}(v_p) \subseteq N_{i}[F(v_i)]$  and if $v_p$ is a sibling of $v_i$ or $p=i$ then we note that $N_{i}[v_p] \subseteq N_{i}[F(v_i)]$. Using Lemma~\ref{l:lemma5}, we can update  $D_{sp}'=(D^*_{sp} \setminus \{v_p\}) \cup \{F(v_i)\}$ with $v_k$ semipaired with $F(v_i)$ to get the required minimum semi-PD-set.\\

\noindent\textbf{Case 2.2.} $L(F(v_i)) \neq 0$ and $L(v_i)=0$.

In this case, $F(v_i) \in D_{i-1}$ and $v_p \neq F(v_i)$. If $v_p=v_i$ in $D_{sp}^*$, then $D_{sp}^*$ is the desired set. Now suppose $v_p \neq v_i$, implying that $v_p$ is either a child or a sibling of $v_i$. Similar to the previous case, we note that if $v_p$ is child of $v_i$ then $N_{i}(v_p) \subseteq N_{i}[F(v_i)]$  and if $v_p$ is a sibling of $v_i$ or $p=i$ then $N_{i}[v_p] \subseteq N_{i}[F(v_i)]$. Using Lemma~\ref{l:lemma5} we can update  $D_{sp}' = (D^*_{sp} \setminus \{v_p\}) \cup \{v_i\}$ with $v_k$ semipaired with $v_i$ to get the required minimum semi-PD-set.\\

\noindent\textbf{Case 2.3.} $L(F(v_i)) = L(v_i) \neq 0$.

  In this case, $F(v_i), v_i \in D_{i-1}$ and neither $F(v_i) =  v_p$ nor $v_i =  v_p$. Here, $v_p$ is either a child or a sibling of $v_i$ and similar to the previous cases, we note that $N_{i}[v_p] \subseteq N_{i}[F(v_i)] \cup N_i[v_i]$. Since $v_p \notin D_{i-1}$, we have $L(v_p) =0$. Therefore, $D_{sp}^*$ with $v_k$ semipaired with $v_p$ is the desired minimum semi-PD-set.\\

\noindent\textbf{Case 3.} $D(v_n)=0$.

Since $D(v_n) = 0$,  $v_n$ is not dominated by $D_{n-1}$.  Hence, no vertex from $N_{G}[v_n]$ is selected in $D_{n-1}$ and $L(u)=0$ for all $u \in N_{G}[v_n]$. Let $v_p$ be the vertex dominating $v_n$ in $D_{sp}^*$ and $v_q$ be the vertex semipaired with $v_p$ in $D_{sp}^*$. If $v_q\in D_{n-1}$, then $L(v_{q})=1$. Hence, by Lemma~\ref{l:lemma5}, $v_q$ is at level $1$ or less in $T(G)$, a contradiction as $D(v_n)=0$. If $v_q \notin D_{n-1}$, then using the fact that $L(u)=0$ for all $u \in N_{G}[v_n]$ and by Lemma~\ref{l:lemma5}, we can state that $D_{sp}'=(D_{sp}^* \setminus \{v_p,v_q\}) \cup \{v_n,u\}$ where $u \in N_G(v_n)$ is the desired minimum semi-PD-set.

This completed the proof of Lemma~\ref{l:lem7}.
\end{proof}

Now, we are ready to state the main result of this Section.

\begin{theorem}
Given a block graph $G$, a minimum semi-PD-set of $G$ can be computed in $O(n+m)$-time.
\end{theorem}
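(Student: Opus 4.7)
The plan has two parts: correctness and running time, both of which lean heavily on the machinery already built up in this section. For correctness, I would simply invoke Lemma~\ref{l:lem7} with $i=n$: this yields a minimum semi-PD-set $D'_{sp}$ with $D_n\subseteq D'_{sp}$. The discussion immediately preceding that lemma already verifies that after the $n$-th iteration $D_n=\{v\mid L(v)>0\}$ is itself a dominating set of $G$ and that every vertex of $D_n$ has $L$-value~$2$, i.e.\ is semipaired with a partner at distance at most~$2$. Hence $D_n$ is a semi-PD-set with $|D_n|\le|D'_{sp}|$, and minimality of $D'_{sp}$ forces equality, so the set $D_{sp}$ returned by Algorithm~1 is a minimum semi-PD-set of $G$.

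For the running-time analysis I would handle preprocessing first. By Theorem~\ref{th:6}, a BEO $\alpha$ and the corresponding block tree $T(G)$ can be computed in $O(n+m)$ time, and the parent map $F$ can be read off in the same bound. A BFS of $T(G)$ rooted at $v_n$, whose reversal is the ordering $\beta=(v_1,\ldots,v_n)$ used by the algorithm, costs an additional $O(n)$ since $T(G)$ has only $n-1$ edges. All labels $D(\cdot)$, $L(\cdot)$, $m(\cdot)$ can be initialised to~$0$ in $O(n)$ time and accessed or updated in $O(1)$ per operation.

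The core of the analysis is bounding the work in the main for-loop. Apart from $O(1)$ label tests and pointer look-ups, the non-trivial work in any iteration consists of scans of $G$-neighbourhoods: scanning $N_G[v_j]$ (where $v_j=F(v_i)$) to flip domination flags and to compute $C=\{u\in N_G[v_j]\mid m(u)\ne 0\}$, and at most one scan of $N_G(v_k)$ or $N_G(v_i)$ to locate an unlabeled partner for semipairing. I would charge the cost of each such scan of $N_G[v]$ to the vertex $v$ that is being inserted into $D_{sp}$ (or whose role forces the scan). Since a vertex is inserted into $D_{sp}$ at most once during the whole execution, and each such insertion triggers $O(1)$ neighbourhood scans, the total cost of all scans telescopes to $\sum_{v\in V}O(\deg_G(v))=O(n+m)$.

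The only delicate point, and the step I expect to require the most care, is this charging argument: one must verify that each edge $uv\in E(G)$ is actually examined only a bounded number of times across the entire loop, in particular in the branch where $D(v_i)\ne 0$ and $m(v_i)\ne 0$ where the algorithm may scan $N_G(v_k)$ to find a vertex $u$ with $L(u)=0$. Once it is observed that each ``selection'' event (adding a vertex to $D_{sp}$) and each ``semipairing'' event occurs at most once per vertex, and each such event is responsible for at most one scan of its endpoint's adjacency list, the bound $O(n+m)$ on the main loop follows, and combined with the $O(n+m)$ preprocessing this yields the claimed total complexity.
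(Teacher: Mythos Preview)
Your proposal is correct and follows essentially the same approach as the paper: correctness via Lemma~\ref{l:lem7} at $i=n$ combined with the observation that $D_n$ is itself a semi-PD-set, and the $O(n+m)$ bound via Theorem~\ref{th:6} for preprocessing plus a linear bound on Algorithm~1. Your running-time discussion is in fact more detailed than the paper's, which simply asserts that ``all the computations in Algorithm~1 can be performed in $O(n+m)$-time'' without spelling out the charging argument you supply.
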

\begin{proof}
By Lemma~\ref{l:lem7}, we claim that there is a minimum semi-PD-set $D_{sp}^*$ such that $D_n \subseteq D_{sp}^*$ where $D_n$ is the semi-PD-set returned by the Algorithm~$1$. This proves that  the set $D_n$ returned by Algorithm~$1$ is a minimum semi-PD-set of $G$. Next, we analyze the complexity of computing $D_{n}$ for a given block graph $G$.

By Theorem~\ref{th:6}, given a block graph $G=(V,E)$, a BEO of vertices of $G$ can be computed in $O(n+m)$-time, and the corresponding block tree can also be constructed in $O(n+m)$-time. Now, given a block tree $T$, we can find the reverse of BFS ordering of $T$ in $O(n+m)$-time. Also, all the computations in Algorithm~$1$ can be performed in $O(n+m)$-time. This proves that a minimum semi-PD-set of any block graph can be computed in $O(n+m)$-time.
\end{proof}

\section{APX-completeness for Bounded Degree Graphs}
\label{Sec:5}
In this section, we show that the \textsc{Minimum Semipaired Dominaion} problem is APX-complete for graphs with maximum degree $3$. It is known that the \textsc{Minimum Semipaired Domination} problem for a graph $G$ with maximum degree $\Delta$ can be approximated with an approximation ratio of $1+\ln(2\Delta+2)$~\cite{iwoca}. Hence the \textsc{Minimum Semipaired Domination} problem is in APX for bounded degree graphs. To show APX-completeness, we use the concept of L-reduction. First, we recall the definition of L-reduction.

\begin{definition} Given two NP optimization problems $F$ and $G$ and a polynomial time transformation $f$ from instances of $F$ to instances of $G$, we say that $f$ is an L-reduction if there are positive constants $\alpha$ and $\beta$ such that for every instance $x$ of $F$ the following holds.
\begin{enumerate}
  \item $opt_{G}(f(x)) \le  \alpha \cdot opt_{F}(x)$.
  \item for every feasible solution $y$ of $f(x)$ with objective value $m_{G}(f(x),y)=c_{2}$
we can in polynomial time find a solution $y'$ of $x$ with
$m_{F}(x,y')=c_{1}$ such that $|opt_{F}(x)-c_{1}| \le \beta
|opt_{G}(f(x))-c_{2}|$.
\end{enumerate}

To show the APX-completeness of a problem $\Pi \in $ APX, it suffices to show that there is an L-reduction from some APX-complete problem to $\Pi$.
\end{definition}

We first show that the \textsc{Minimum Semipaired Dominaion} problem is APX-complete for graphs with maximum degree $4$. To show this result, we prove that the reduction given in the proof of Theorem $2$ in \cite{iwoca} is an L-reduction. So, we show an L-reduction from the \textsc{Minimum Vertex Cover problem} for  graphs with maximum degree $3$~\cite{alimonti}. The \textsc{Minimum Vertex Cover} problem is already known to be APX-complete for graphs with maximum degree $3$. For a graph $G=(V,E)$, a set $S\subseteq V$ is called a \emph{vertex cover} of $G$ if for every edge $e=uv\in E$, $S\cap\{u,v\}\neq \emptyset$. For a graph $G$, the \textsc{Minimum Vertex Cover} problem is to find a vertex cover of $G$ of minimum cardinality.
 Next, we present an L-reduction from the \textsc{Minimum Semipaired Domination} problem for graphs with maximum degree $4$ to the \textsc{Minimum Semipaired Domination} problem for graphs with maximum degree $3$.

\begin{theorem}
\label{l:apx4}
The \textsc{Minimum Semipaired Domination} problem is APX-complete for graphs with maximum degree $4$.
\end{theorem}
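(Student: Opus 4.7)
The plan is to verify that the polynomial reduction employed in Theorem~$2$ of \cite{iwoca}, which turns an instance $G=(V,E)$ of \textsc{Minimum Vertex Cover} on graphs of maximum degree~$3$ into an instance $G'$ of the \textsc{Minimum Semipaired Domination} problem on graphs of maximum degree~$4$, is in fact an L-reduction. Since the \textsc{Minimum Semipaired Domination} problem is already in APX for bounded-degree graphs by the $1+\ln(2\Delta+2)$-approximation noted in the introduction, and since \textsc{Minimum Vertex Cover} is APX-complete on graphs of maximum degree~$3$ by \cite{alimonti}, exhibiting such an L-reduction completes the APX-completeness proof.

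First, I would recall that the reduction of \cite{iwoca} builds $G'$ by attaching a fixed small gadget to each vertex of $G$. The NP-completeness argument there already establishes a structural identity of the form $\gamma_{pr2}(G') = \tau(G) + c|V|$ for an explicit constant $c$ arising from the per-vertex contribution of the gadget, together with two constructive maps I shall reuse: any vertex cover $S$ of $G$ extends canonically to a semi-PD-set of $G'$ of size $|S|+c|V|$, and, conversely, a polynomial-time routine produces from any semi-PD-set $D'$ of $G'$ a vertex cover of $G$ of size at most $|D'|-c|V|$.

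Second, the first L-reduction inequality is straightforward. Because $G$ has maximum degree~$3$, the standard counting bound $\tau(G) \ge |E|/3$, combined with $|E| \ge |V|/2$ whenever $G$ has no isolated vertex (a case we may safely assume), gives $\tau(G) \ge |V|/6$. Substituting into the identity above yields $\gamma_{pr2}(G') = \tau(G) + c|V| \le (1 + 6c)\tau(G)$, so $\alpha = 1+6c$ works.

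Third, for the second L-reduction inequality, given a feasible semi-PD-set $D'$ of $G'$ with $|D'| = c_2$, I apply the recovery routine to obtain in polynomial time a vertex cover $S$ of $G$ with $|S| = c_1 \le c_2 - c|V|$. Combined with $\tau(G) = \gamma_{pr2}(G') - c|V|$, this immediately yields $\tau(G) - c_1 \le c_2 - \gamma_{pr2}(G')$, so $\beta = 1$ suffices. The main obstacle will be the recovery routine itself: to guarantee the one-for-one loss bound, I must demonstrate that whenever a semi-PD-set places more than $c$ vertices inside some gadget, or uses ``wrong'' gadget vertices, a local exchange can shift those vertices onto the image of $V$ in $G'$ without increasing the total size and while preserving the semipairing condition at distance at most~$2$. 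Carefully enumerating the handful of possible gadget configurations and exhibiting these exchange rules constitutes the bulk of the technical work.
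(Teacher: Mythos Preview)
Your overall strategy matches the paper's: verify that the reduction from \textsc{Minimum Vertex Cover} on cubic graphs used in \cite{iwoca} is an L-reduction. However, several concrete details you assume about that reduction are wrong, and they matter for the constants.

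First, the reduction does not merely attach a gadget to each vertex of $G$. It creates \emph{two} disjoint copies $V_1\cup E_1$ and $V_2\cup E_2$ of the vertex--edge incidence bipartite graph of $G$, links $v_i^1$ and $v_i^2$ through a common vertex $w_i$, and hangs a path $w_ix_iy_iz_i$ off each $w_i$. The resulting structural identity is $\gamma_{pr2}(G') = 2\,\tau(G) + 2n$, not $\tau(G)+c\,n$. The factor $2$ in front of $\tau(G)$ is the whole point of the two copies.

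Second, because of that factor, the recovery step is not a one-for-one exchange. Given a semi-PD-set $D_{sp}$ of $G'$, one first normalises so that $\{w_i,y_i\}\subseteq D_{sp}$ (cost exactly $2n$), then lets $D=D_{sp}\setminus\{w_i,y_i\}$ and takes the \emph{smaller} of $D\cap(V_1\cup E_1)$ and $D\cap(V_2\cup E_2)$, which has size at most $(|D_{sp}|-2n)/2$; a trivial swap replaces any edge-vertex $e_i^1$ by an incident $v_j^1$. This yields a vertex cover $V_c$ with $|V_c|\le \tfrac12(|D_{sp}|-2n)$, hence $|V_c|-|V_c^*|\le \tfrac12(|D_{sp}|-|D_{sp}^*|)$ and $\beta=\tfrac12$, not $\beta=1$. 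No case analysis of ``gadget configurations'' is needed.

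Third, the paper bounds $\alpha$ via $n-1\le m\le 3\,\tau(G)$ (each edge is covered by a vertex of degree at most $3$), giving $|D_{sp}^*|=2\tau(G)+2n\le 10\,\tau(G)$, so $\alpha=10$. Your route through $\tau(G)\ge |E|/3\ge |V|/6$ also works in principle, but plugged into the correct identity it only gives $\alpha=14$; either way the reduction is an L-reduction, just with different constants.
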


\begin{proof}
The  \textsc{Minimum Semipaired Domination} problem is in APX for graphs with maximum degree $4$. Hence, to prove the APX-completeness, it is sufficient to give an L-reduction $f$, from the set of instances for the \textsc{Minimum Vertex Cover problem} for  graphs with maximum degree $3$, to the set of instances for the \textsc{Minimum Semipaired Domination} problem for graphs with maximum degree $4$.

Given a graph $G=(V,E)$, where $V=\{v_{1},v_{2},\ldots,v_{n}\}$ with $d_{G}(v_{i})\leq 3$ for each  $i\in [n]$ and $E=\{e_{1},e_{2},\ldots,e_{m}\}$, we construct a graph $G'=(V',E')$ as follows:

Let $V_{k}=\{v_{i}^{k} \mid i \in [n]\}$, $E_{k}=\{e_{j}^{k}\mid j \in [m]\}$ for $k\in [2]$ and $X=\{w_{i}, x_{i}, y_{i}, z_{i}\mid i \in [n]\}$.
Define $V(G')= V_{1}\cup V_{2}\cup E_{1}\cup E_{2} \cup X$,
and $E(G')=\{v_{i}^{1}w_{i}, v_{i}^{2}w_{i}, w_{i}x_{i}, x_{i}y_{i}, y_{i}z_{i} \mid i \in [n]\} \cup \{v_{i}^{l}e_{j}^{l},v_{k}^{l}e_{j}^{l}\mid l\in [2], j \in [m]$ and $e_j = v_{i}v_{k}  \in E\}$. Fig.~\ref{fig:graph} illustrates the construction of $G'$ from $G$. Note that if the degree of a vertex in $G$ is bounded by $3$, then a vertex in $G'$ has degree at most $4$.

 \begin{figure}[htbp]
  \begin{center}
    \includegraphics[width=0.98\textwidth]{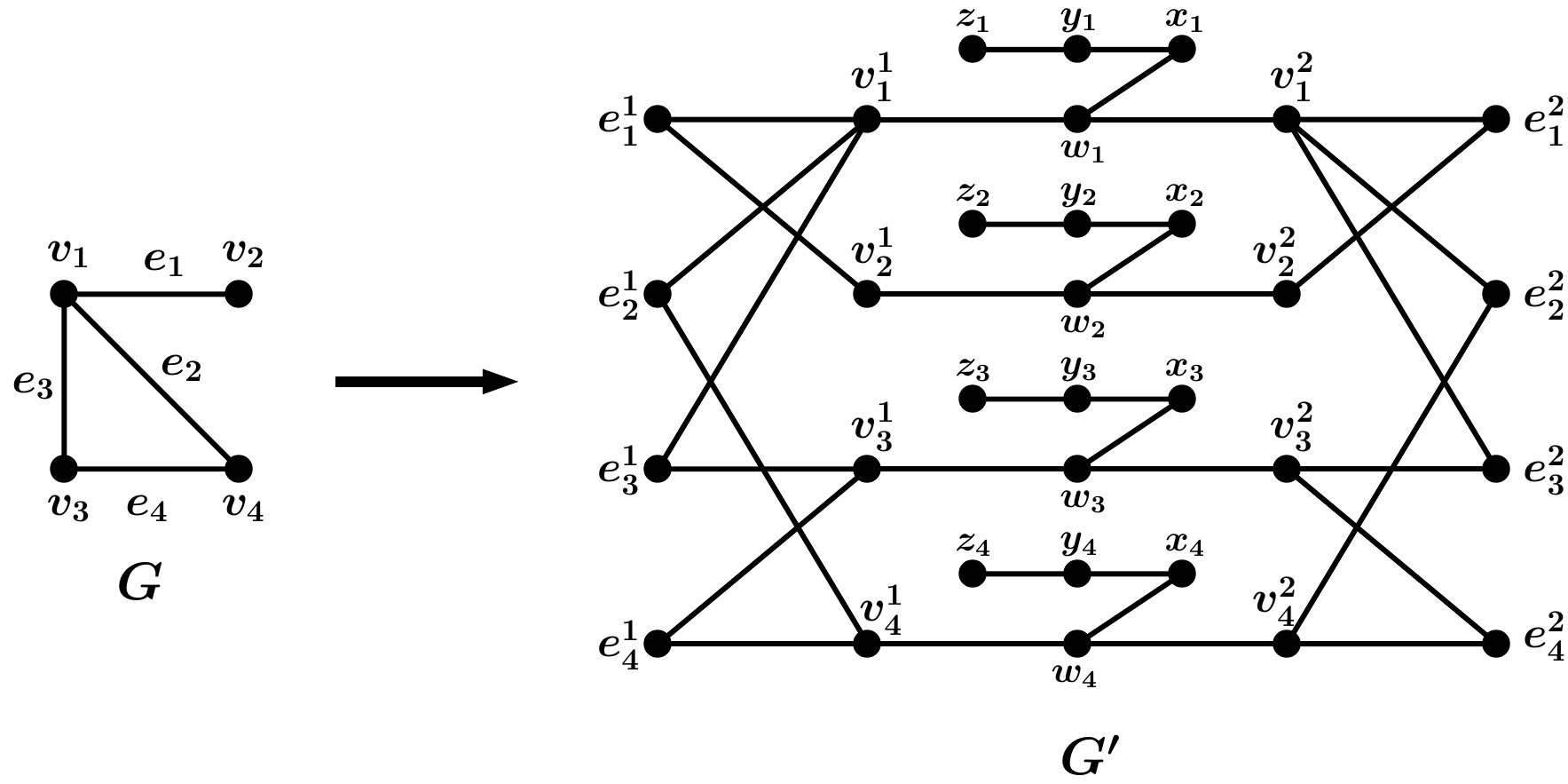}
    \caption{An illustration of the construction of $G'$ from $G$ in the proof of Theorem~\ref{l:apx4}.}
    \label{fig:graph}
  \end{center}
\end{figure}

Next, we show that the above reduction is an L-reduction. The following claim is enough to complete the proof of the theorem.

\begin{claim}\label{c:claim2}
If $V_{c}^{*}$ denotes a minimum vertex cover of $G$ and $D_{sp}^{*}$ denotes a minimum cardinality semi-PD-set of $G'$, and $n$ denotes the number of vertices of  $G$, then $|D_{sp}^{*}| = 2|V_{c}^{*}|+2n$. Further, if $D_{sp}$ is an arbitrary semi-PD-set of $G'$, then we can construct a vertex cover $V_{c}$ of $G$, such that $|V_{c}|-|V_{c}^{*}| \leq |D_{sp}|-|D_{sp}^{*}|$.
 \end{claim}
\begin{proof}
Let $V_{c}^{*}$ denotes a minimum vertex cover of $G$. Then, the set
\[
D_{sp}=\{v_{i}^{1},v_{i}^{2}\mid v_{i}\in V_{c}^{*} \} \cup \{w_i, y_i \mid i \in [n]\}
\]
is a semi-PD-set of $G'$ which implies that $|D_{sp}| \leq 2|V_{c}^{*}| +2n$. Hence, if $D_{sp}^{*}$ denotes a semi-PD-set of $G'$ of minimum cardinality, then $|D_{sp}^{*}|\leq 2|V_{c}^{*}|+2n$.

Next, suppose that $D_{sp}$ is an  arbitrary semi-PD-set of $G'$. Note that $|D_{sp} \cap \{w_{i},x_{i},y_{i},z_{i}\}|\geq 2$ for each $i\in [n]$. Hence, without loss of generality, we may assume that $\{w_{i},y_{i}\mid i \in [n]\} \subseteq D_{sp}$ (with $w_i$ and $y_i$ paired in $D_{sp}$). Let
\[
D = D_{sp} \setminus \{w_{i},y_{i}\mid i \in [n]\}.
\]

We note that $|D| = |D_{sp}| - 2n$. Let $D_1 = D\cap (V_{1}\cup E_{1})$ and $D_2 = D\cap (V_{2}\cup E_{2})$. Renaming the sets if necessary, we may assume that $|D_1| \leq |D_2|$, implying that $|D_1| \leq |D|/2$. In order to dominate a vertex $e_{i}^{1}\in E_{1}$, either $e_{i}^{1} \in D_{sp}$ or $v_{j}^{1}\in D_{sp}$ where $v_{j}^{1}\in N_{G'}(e_{i}^{1})$. If $N_{G'}(e_{i}^{1}) \cap D_{sp} = \emptyset$, then we update $D_{1}$ as $D_1 = (D_1 \setminus \{e_{i}^{1}\}) \cup \{v_{j}^{1}\}$ for some $v_{j}^{1}\in N_{G'}(e_{i}^{1})$. We note that the cardinality of the set $D_1$ remains unchanged after updating $D_1$ for all such $e_{i}^{1}$, and so, $|D_{1}| \leq |D|/2 = (|D_{sp}|-2n)/2$. Also every vertex in $E_1$ is now dominated by some vertex in $D_1$. Therefore the set
\[
V_{c}=\{v_{i}\mid v_{i}^{1}\in D_{1}\}
\]
is a vertex cover of $G$ and $|V_{c}|=|D_{1}| \leq \frac{1}{2}(|D_{sp}|-2n)$. Hence, $|D_{sp}| \geq 2 |V_{c}|+2n$.
Now, if $V_{c}^{*}$ is a minimum vertex cover of $G$, then $|D_{sp}| \ge 2 |V_{c}^{*}|+2n$. This is true for every semi-PD-set, $D_{sp}$, of $G'$. In particular, if $D_{sp}^{*}$ is a minimum semi-PD-set of $G'$, then we have $|D_{sp}^{*}| \geq 2|V_{c}^{*}|+2n$. As observed earlier, $|D_{sp}^{*}| \leq 2|V_{c}^{*}| +2n$. Consequently, $|D_{sp}^{*}| = 2|V_{c}^{*}|+2n$. Further,
\[
|V_{c}| - |V_{c}^{*}|
\leq  \frac{1}{2}(|D_{sp}|-2n) - \frac{1}{2}(|D_{sp}^{*}| - 2n)
= \frac{1}{2}(|D_{sp}|-|D_{sp}^{*}|).
\]
\end{proof}

Since the maximum degree of $G$ is $3$ and $G$ is connected, $n-1 \leq m \leq 3| V_{c}^{*}|$. Therefore, $|D_{sp}^{*}|=2|V_{c}^{*}|+2n \leq 2| V_{c}^{*}|+2(3| V_{c}^{*}|+1) \leq 8| V_{c}^{*}| +2 \leq 10| V_{c}^{*}|$. This proves that $f$ is an L-reduction with $\alpha =10$ and $\beta=\frac{1}{2}$.
\end{proof}

We observe that the graph $G'$ constructed in Theorem~\ref{l:apx4} is also a bipartite graph. Hence, as a corollary of Theorem~\ref{l:apx4}, we have the following result.

\begin{cor}
\label{corollary1}
The \textsc{Minimum Semipaired Domination} problem is APX-complete for bipartite graphs with maximum degree $4$.
\end{cor}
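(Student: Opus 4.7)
The plan is simply to verify that the graph $G'$ constructed in the L-reduction of Theorem~\ref{l:apx4} is bipartite; the rest of the statement then follows immediately from that theorem together with the $(1+\ln(2\Delta+2))$-approximation of~\cite{iwoca} that places the problem in APX for bounded-degree graphs.

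First I would exhibit an explicit $2$-coloring of $V(G')$. A natural choice is
\[
A = V_{1}\cup V_{2}\cup\{x_{i},z_{i}\mid i\in[n]\},\qquad
B = E_{1}\cup E_{2}\cup\{w_{i},y_{i}\mid i\in[n]\},
\]
and I would check that each of the five families of edges in $E(G')$ listed in the proof of Theorem~\ref{l:apx4} has one endpoint in $A$ and one in $B$. For the path edges we observe: $v_{i}^{l}w_{i}$ joins $A$ to $B$; $w_{i}x_{i}$ joins $B$ to $A$; $x_{i}y_{i}$ joins $A$ to $B$; $y_{i}z_{i}$ joins $B$ to $A$. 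For the edges coming from $E(G)$, each edge $v_{i}^{l}e_{j}^{l}$ joins a vertex of $V_{l}\subseteq A$ to a vertex of $E_{l}\subseteq B$. Thus $(A,B)$ is a bipartition of $G'$.

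Having established bipartiteness, I would then recall that by construction $\Delta(G')\le 4$ whenever $\Delta(G)\le 3$, so $G'$ is a bipartite graph of maximum degree at most $4$. The map $f$ from instances of the \textsc{Minimum Vertex Cover} problem on graphs of maximum degree~$3$ to instances of the \textsc{Minimum Semipaired Domination} problem on bipartite graphs of maximum degree~$4$ is therefore well-defined, and the L-reduction argument (with the same constants $\alpha=10$ and $\beta=\tfrac{1}{2}$) given in Theorem~\ref{l:apx4} applies verbatim, yielding APX-hardness on this restricted class. Combined with the APX membership already noted, this gives the claimed APX-completeness.

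There is essentially no obstacle here: the only non-routine item is verifying that none of the edge families of $G'$ create an odd cycle, and the bipartition above does this uniformly. I would therefore keep the argument to a short verification of the bipartition followed by a one-sentence appeal to Theorem~\ref{l:apx4}.
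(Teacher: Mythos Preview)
Your proposal is correct and follows exactly the paper's approach: the paper simply remarks that the graph $G'$ built in Theorem~\ref{l:apx4} is bipartite and derives the corollary immediately. Your version just makes this explicit by writing down the bipartition $(A,B)$ and checking each edge family, which is a harmless (and helpful) elaboration of the paper's one-line observation.
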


\begin{theorem}
\label{t:apx3}
The \textsc{Minimum Semipaired Domination} problem is APX-complete for graphs with maximum degree $3$.
\end{theorem}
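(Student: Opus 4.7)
The plan is to reduce from an already-established APX-hard problem. Specifically, I would give an L-reduction from the \textsc{Minimum Vertex Cover} problem on graphs of maximum degree~$3$, which is APX-complete by~\cite{alimonti}, to \textsc{Minimum Semipaired Domination} on graphs of maximum degree~$3$. The idea is to modify the construction used in the proof of Theorem~\ref{l:apx4} so that no vertex of the constructed graph acquires degree~$4$. First I would observe that the upper bound is already in place: the $1+\ln(2\Delta+2)$-approximation algorithm of~\cite{iwoca} gives a constant-factor approximation when $\Delta=3$, so \textsc{Minimum Semipaired Domination} lies in APX on graphs of maximum degree~$3$; only APX-hardness remains.

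Given an instance $G=(V,E)$ of \textsc{Minimum Vertex Cover} with $\Delta(G)\le 3$, I would start from the graph $G'$ constructed as in the proof of Theorem~\ref{l:apx4}. In that graph the only vertices of degree~$4$ are the copies $v_i^l$ corresponding to a vertex $v_i\in V$ with $d_G(v_i)=3$: such a $v_i^l$ has the gadget neighbour $w_i$ together with three edge-vertices $e_j^l$. I propose to split each offending $v_i^l$ into two vertices $v_i^{l,1}$ and $v_i^{l,2}$ joined by a new edge, and to redistribute its four neighbours so that $v_i^{l,1}$ keeps the gadget-edge to $w_i$ together with one of the edge-vertices, while $v_i^{l,2}$ absorbs the remaining two edge-vertices. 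The resulting graph $G''$ has maximum degree~$3$ and is constructible from $G$ in polynomial time.

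Next I would prove the analogue of Claim~\ref{c:claim2}: the optima are related by an affine identity $\gamma_{pr2}(G'') = 2|V_c^*| + 2n + c\cdot t$, where $t$ is the number of vertices of $G$ of degree exactly~$3$ and $c$ is a fixed small integer, together with a translation inequality of the form $|V_c|-|V_c^*|\le \tfrac{1}{2}(|D_{sp}|-|D_{sp}^*|)$ for every semi-PD-set $D_{sp}$ of $G''$. The forward direction builds a semi-PD-set by taking both copies $v_i^{l,1},v_i^{l,2}$ of each split vertex associated with a chosen vertex cover element and semipairing them across the newly added edge, together with the $\{w_i,y_i\}$ pairs from the proof of Theorem~\ref{l:apx4}. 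Combined with the handshake bounds $n-1\le m\le 3|V_c^*|$ and $t\le n$, the L-reduction constants $\alpha$ and $\beta$ then follow as in Theorem~\ref{l:apx4}.

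The main obstacle will be the exchange argument in the reverse direction. When a semi-PD-set of $G''$ takes only one of $v_i^{l,1},v_i^{l,2}$, or takes both but semipairs them with external vertices, rewriting to canonical form without increasing the size requires a careful case analysis on where the semi-partners sit, analogous in flavour to the rewriting already performed in the proof of Claim~\ref{c:claim2} but more intricate owing to the new internal edges within each split gadget. Once this canonical form is secured the L-reduction inequalities are routine, and APX-hardness — hence APX-completeness — for graphs of maximum degree~$3$ follows.
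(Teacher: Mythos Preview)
Your overall strategy---a direct L-reduction from \textsc{Minimum Vertex Cover} on cubic-bounded graphs to \textsc{Minimum Semipaired Domination} on cubic-bounded graphs---is different from the paper's. The paper instead composes two L-reductions: first the reduction of Theorem~\ref{l:apx4} (to maximum degree~$4$), and then a second L-reduction from \textsc{Minimum Semipaired Domination} on degree-$4$ graphs to degree-$3$ graphs, replacing each degree-$4$ vertex by a fixed seven-vertex gadget. The point of that gadget is that it forces \emph{exactly two} additional vertices into any optimal semi-PD-set regardless of whether the original vertex belonged to the solution, so the optima shift by $2k$ and the translation inequality is immediate.

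Your one-step variant, as written, has a concrete gap in the forward direction. After you split a degree-$3$ vertex $v_i$ into $v_i^{l,1}$ (adjacent to $w_i$ and one edge-vertex) and $v_i^{l,2}$ (adjacent to the other two edge-vertices), consider a vertex $v_i\notin V_c$ of degree~$3$. In your proposed semi-PD-set $\{w_i,y_i:i\in[n]\}\cup\{\text{copies of cover vertices}\}$, the new vertex $v_i^{l,2}$ has neighbours $v_i^{l,1}$ and two edge-vertices, none of which lies in the set; hence $v_i^{l,2}$ is undominated. So the set you describe is not a semi-PD-set of $G''$, and the claimed identity $\gamma_{pr2}(G'')=2|V_c^*|+2n+c\cdot t$ does not follow from your construction. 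More generally, the additive correction cannot be $c\cdot t$ for a constant $c$ depending only on the number $t$ of degree-$3$ vertices of $G$: with a bare edge-split the extra cost incurred at a split gadget depends on whether $v_i$ is in the cover or not, so the shift is not an affine function of instance data alone. Repairing this would require attaching a private pendant structure to each $v_i^{l,2}$ (so that it is always dominated and contributes a fixed surplus), at which point you are essentially reinventing the paper's gadget; the simple two-vertex split is not enough.
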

\begin{proof}
To prove the theorem, it is sufficient to describe an L-reduction $h$, from the \textsc{Minimum Semipaired Domination} for graphs with maximum degree $4$ to the \textsc{Minimum Semipaired Domination} for graphs with maximum degree $3$.

Given a graph $G$ with maximum degree $4$, we construct a graph $G'$ with maximum degree $3$ as follows: for a vertex $v\in V(G)$ with $d_G(v) = 4$, we split and transform $v$ as illustrated in Fig.~\ref{fig:deg4}. For a vertex $v$ with $d_G(v) \leq 3$, we do not perform any transformation.

 \begin{figure}[htbp]
  \begin{center}
    \includegraphics[width=0.98\textwidth]{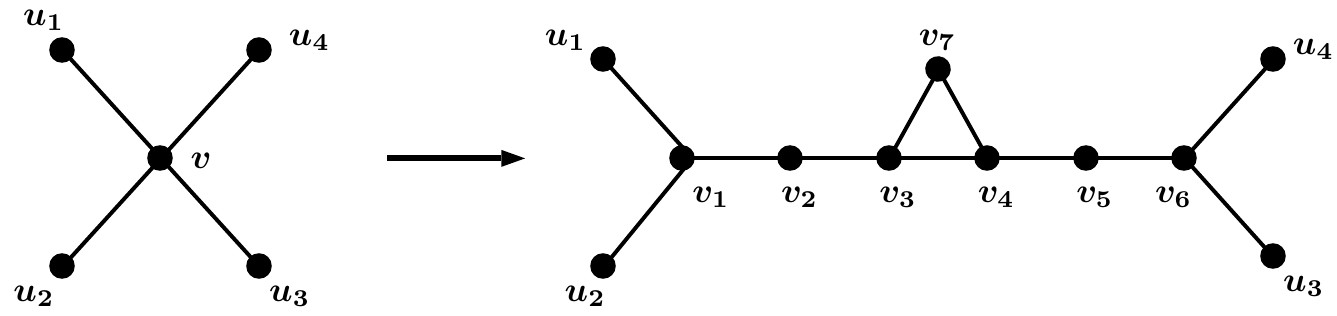}
    \caption{Transformation of a vertex $v \in V(G)$ with $d_{G}(v)$ = 4.}
    \label{fig:deg4}
  \end{center}
\end{figure}

By construction, for every vertex $v$ in $G'$, we have $d_{G'}(v) \leq 3$. Next, we show that the above reduction is an L-reduction. The following claim is enough to complete the proof of the theorem.

\begin{claim}\label{c:claim3}
If $D_{sp}^{*}$ and $D_{sp}'^{*}$ are minimum semi-PD-sets of $G$ and $G'$, respectively, and $k$ denotes the number of vertices of degree~$4$ in $G$, then $|D_{sp}'^{*}| = |D_{sp}^{*}|+2k$. Further, if $D_{sp}'$ is an arbitrary semi-PD-set of $G'$, then we can construct a semi-PD-set $D_{sp}$ of $G$, such that $|D_{sp}|-|D_{sp}^{*}| \leq |D_{sp}'|-|D_{sp}'^{*}|$.
 \end{claim}
\begin{proof}
Let $D_{sp}$ be a semi-PD-set of $G$. We construct a semi-PD-set $D_{sp}'$ of $G'$ as follows:
\begin{enumerate}
\item[(1)] If $d_G(v) \leq 3$, $v \in D_{sp}'$ if and only if $v \in D_{sp}$.
\item[(2)] If $d_{G}(v) = 4$, then we proceed as follows.
\begin{enumerate}
\item[(2.1)] If $v \in D_{sp}$ and $v$ is semipaired with a vertex $u \in N[w]$ such that $w \in \{u_1, u_2\}$, take $v_1, v_4, v_6$ in $D_{sp}'$.
\item[(2.2)] If $v \in D_{sp}$ and $v$ is semipaired with a vertex $u \in N[w]$ such that $w \in \{u_3, u_4\}$, take $v_1, v_3, v_6$ in $D_{sp}'$.
\item[(2.3)] If $v \notin D_{sp}$ and $v$ is dominated by a vertex in the set $\{u_1, u_2\}$, take $v_3, v_5$ in $D_{sp}'$.
\item[(2.4)] If $v \notin D_{sp}$ and $v$ is dominated by a vertex in the set $\{u_3, u_4\}$, take $v_2, v_4$ in $D_{sp}'$.
\end{enumerate}
\end{enumerate}
Let $k$ be the number of vertices of degree $4$ in $G$. We observe that, $D_{sp}'$ is a semi-PD-set of the transformed graph $G'$ and $|D_{sp}'| = |D_{sp}|+2k$. Thus, $|D_{sp}'^{*}| \leq |D_{sp}^{*}|+2k$, where $D_{sp}'^{*}$ and $D_{sp}^{*}$ are minimum semi-PD-sets of $G'$ and $G$, respectively.

Conversely, let $D_{sp}'$ be a semi-PD-set of $G'$. Now we construct a semi-PD-set $D_{sp}$ of $G$ from $D_{sp}'$. If $d_G(v) \leq 3$, then we will include $v$ in $D_{sp}$ if and only if $v \in D_{sp}'$. If $d_G(v) = 4$, then $v$ is transformed as shown in Fig.~\ref{fig:3}. For each vertex $v \in V(G)$ of degree~$4$, let $\phi(v)=|\{v_1,v_2, \ldots, v_7\} \cap D_{sp}'|$. If the degree of $v$ in $G$ is $4$, then we include $v$ in $D_{sp}$ if and only if $\phi(v) \geq 3$. Without loss of generality, we may assume that, to dominate $v_7$ either $v_3 \in D_{sp}'$ or $v_4 \in D_{sp}'$. Also, $v_3$ and $v_4$ can be semipaired only with the vertices in the set $\{v_1, v_2, \ldots , v_7\}$. Hence, $\phi(v) \geq 2$.

We note that the vertices in the set $\{v_2,v_3,v_4,v_5,v_7\}$ cannot be dominated by a vertex $w \in D_{sp}' \setminus \{v_1, v_2 , \ldots , v_7\}$. Hence, if $\phi(v) = 2$, then without loss of generality we may assume that either $\{v_3,v_5\} \subseteq D_{sp}'$ and $v_3$ is semipaired with $v_5$ in $D_{sp}'$ or  $\{v_2,v_4\} \subseteq D_{sp}'$ and $v_4$ is semipaired with $v_2$ in $D_{sp}'$. Note that in either case, $V(G') \setminus \{v_1,v_2, \ldots, v_7\}$ is dominated by $D_{sp}'\setminus \{v_1,v_2, \ldots, v_7\}$.

A vertex $u \in D_{sp}'\setminus \{v_1,v_2, \ldots, v_7\}$ can only be semipaired with a vertex in the set $\{v_1, v_2,v_5,v_6\}$ in $D_{sp}'$. If $\phi(v) = 3$, then only one vertex in the set $\{v_1, v_2,v_5,v_6\}$ is semipaired with a vertex $w$ such that $w \in D_{sp}' \setminus \{v_1, v_2,\ldots, v_7\}$. If $w \in V(G)$, then in $D_{sp}$, the vertex $v$ will be semipaired with vertex $w$. If $w\notin V(G)$, that is, if $w$ is obtained by splitting some vertex $u$ of degree~$4$ in $G$, then $v$ will be semipaired with vertex $u$.

Suppose that $\phi(v) \geq 4$ and more than two vertices in the set $\{v_1, v_2,v_5,v_6\}$ are semipaired in $D_{sp}' \setminus \{v_1, v_2,\ldots ,v_7\}$. For simplicity, suppose $\{x,y\} \subseteq D_{sp}' \setminus \{v_1,v_2, \ldots, v_7\}$ where $x$ and $y$ are semipaired with some vertices in the set $\{v_1, v_2,v_5,v_6\}$. Note that $V(G') \setminus \{v_1,v_2, \ldots, v_7\}$ is dominated by $(D_{sp}' \setminus \{v_1,v_2, \ldots, v_7\}) \cup \{v\}$. If  $\{x,y\} \subseteq D_{sp}$, then semipair $x$ with $v$ in $D_{sp}$. Now it may be the case that $y$ is not semipaired in $D_{sp}$. If the vertex $y \in D_{sp}$ has no partner in $D_{sp}$ and $N_G(y) \subseteq D_{sp}$, then remove $y$ from $D_{sp}$; otherwise, we will include another vertex $u \notin D_{sp}$ which is at distance at most~$2$ from $y$.

The resulting set $D_{sp}$ is a semi-PD-set of $G$. We note that, $|D_{sp}| \leq |D_{sp}'|-2k$. Thus, $|D_{sp}^{*}| \leq |D_{sp}'^{*}|-2k$ and hence, $|D_{sp}'^{*}| = |D_{sp}^{*}|+2k$. Consequently, we have $|D_{sp}|-|D_{sp}*| \leq |D_{sp}'|-|D_{sp}'^{*}|$.
\end{proof}

Finally, since $G$ is a graph with maximum degree $4$, for any dominating set $D$ of $G$, we have $|D| \geq |V(G)|/5$. In particular, $|D_{sp}^*| \geq |V(G)|/5$. Since $k\leq |V(G)|\leq 5|D_{sp}^*|$, we have $|D_{sp}'^{*}| \leq |D_{sp}^{*}|+2k \leq 11|D_{sp}^*|$. Hence, we may conclude that $h$ is an L-reduction from the \textsc{Minimum Semipaired Domination} for graphs with maximum degree $4$ to the \textsc{Minimum Semipaired Domination} for graphs with maximum degree $3$ with $\alpha = 11$ and $\beta = 1$.
\end{proof}

\section{Conclusion}
The \textsc{Semipaired Domination Decision} problem is already known to be NP-complete for chordal graphs. In this paper, we study this problem for two important subclasses of chordal graphs: split graphs and block graphs. We show that the \textsc{Semipaired Domination Decision} problem remains NP-complete for split graphs, and we propose a linear-time algorithm to compute a minimum cardinality semi-PD-set of a block graph. We also prove that the \textsc{Minimum Semipaired Domination} is APX-complete for graphs with maximum degree $3$. It will be interesting to study the complexity status of the problem for other important subclasses of chordal graphs, for example strongly chordal graphs, doubly chordal graphs, undirected path graphs etc.

\nocite{*}
\bibliographystyle{abbrvnat}
\bibliography{Semi}
\label{sec:biblio}

\end{document}